\newtheorem{thm}{Theorem}
\newtheorem{cor}[thm]{Corollary}
\newtheorem{lem}[thm]{Lemma}
\newtheorem{prop}[thm]{Proposition}
\theoremstyle{definition}
\newtheorem{defn}{Definition}
\newtheorem{exa}{Example}
\newcommand{\conv}{{\textrm{conv} }}
\newcommand{\gp}[2]{\ensuremath{#1^{#2}}}
\newcommand{\gpp}[2]{\ensuremath{#1^{N[#2]}}}
\newcommand{\gme}[2]{\ensuremath{#1^{#2}}}
\newcommand{\Path}[1]{\ensuremath{{\sf P}_{#1}}}
\newcommand{\Polyg}[1]{\ensuremath{{\mathcal P}_{{\cal G}(#1)}}}
\newcommand{\Polyc}[1]{\ensuremath{{\mathcal P}_{{\cal C}(#1)}}}
\newcommand{\Star}[1]{\ensuremath{{\sf S}_{#1}}}
\newcommand{\C}[1]{\ensuremath{{\sf C}_{#1}}}
\newcommand{\K}[1]{\ensuremath{{\sf K}_{#1}}}
\newcommand{\CS}[2]{\ensuremath{{\sf CS}_{#1,#2}}}
\newcommand{\tur}[2]{\ensuremath{{\sf T}_{#1,#2}}}
\newcommand{\turc}[2]{\ensuremath{{\sf TC}_{#1,#2}}}
\newcommand{\Dec}[4]{\ensuremath{{\mathcal D}(#1,#2,#3,#4)}}
\newcommand{\F}{F}
\newcommand{\al}{\alpha}
\newcommand{\fT}{f_{\sf T}}
\newcommand{\fTC}{f_{\sf TC}}
\newcommand{\Lg}[2]{\ensuremath{{\sf L}_{#1} (#2)}}
\newcommand{\Lc}[2]{\ensuremath{{\sf L}'_{#1} (#2)}}
\title{Fibonacci Index and Stability Number of Graphs: \\a Polyhedral Study}
\author{V\'eronique Bruy\`ere\footnote{Department of Theoretical Computer Science, Universit\'e de  Mons-Hainaut, Avenue du Champ de Mars 6, B-7000 Mons, Belgium.} \and 
Hadrien M\'elot\footnotemark[1] $^{,}$\footnote{Charg\'e de Recherches F.R.S.-FNRS. Corresponding author. E-mail: {\tt hadrien.melot@umh.ac.be}.}
}
\begin{document}

%






%


\maketitle
\vspace*{0.2cm}

\hrule
\vspace*{0.2cm}
\small
\noindent
\textbf{Abstract.} The Fibonacci index of a graph is the number of its stable sets. This parameter is widely studied and has applications in chemical graph theory. In this paper, we establish tight upper bounds for the Fibonacci index in terms of the stability number and the order of general graphs and connected graphs. Tur\'an graphs frequently appear in extremal graph theory. We show that Tur\'an graphs and a connected variant of them are also extremal for these particular problems. We also make a polyhedral study by establishing all the optimal linear inequalities for the stability number and the Fibonacci index, inside the classes of general and connected graphs of order $n$.

\vspace*{0.2cm}
\noindent
\emph{Keywords:} Stable set; Fibonacci index; Merrifield-Simmons index; Tur\'an graph; $\al$-critical graph; GraPHedron.

\vspace*{0.2cm}
\hrule

\normalsize

\section{Introduction}

The Fibonacci index $\F(G)$ of a graph $G$ was introduced in 1982 by Prodinger and Tichy~\cite{PT82} as the number of stable sets in $G$. In 1989, Merrifield and Simmons~\cite{MS89} introduced independently this parameter in the chemistry literature\footnote{The Fibonacci index is called the Fibonacci number by Prodinger and Tichy~\cite{PT82}.  Merrifield and Simmons introduced it as the $\sigma$-index~\cite{MS89}, also known as the Merrifield-Simmons index.}. They showed that there exist correlations between the boiling point and the Fibonacci index of a molecular graph. Since, the Fibonacci index has been widely studied, especially during the last few years. The majority of these recent results appeared in chemical graph theory~\cite{LLW03, LZG05, TW05, Wag07, WH06, WHW07} and in extremal graph theory~\cite{HW07, KTWZ07, PV, PV05, PV06}. 

In this literature, several results are bounds for $\F(G)$ among graphs in particular classes. Lower and upper bounds inside the classes of general graphs, connected graphs, and trees are well known (see Section~\ref{sec_prelim}).  Several authors give a characterization of trees with maximum Fibonacci index inside the class ${\cal T}(n,k)$ of trees with order $n$ and a fixed parameter $k$. For example, Li et al.~\cite{LZG05} determine such trees when $k$ is the diameter; Heuberger and Wagner~\cite{HW07} when $k$ is the maximum degree; and Wang et al.~\cite{WHW07} when $k$ is the number of pending vertices. Unicyclic graphs are also investigated in similar ways~\cite{PV05,PV06,WH06}. 

The Fibonacci index and the stability number of a graph are both related to stable sets. Hence, it is natural to use the stability number as a parameter to determine bounds for $\F(G)$. Let ${\cal G}(n,\al)$ and ${\cal C}(n,\al)$ be the classes of -- respectively general and connected -- graphs with order $n$ and stability number $\al$. The lower bound for the Fibonacci index is known for graphs in these classes. Indeed, Pedersen and Vestergaard~\cite{PV06} give a simple proof to show that if $G \in {\cal G}(n,\al)$ or $G \in {\cal C}(n,\al)$, then $\F(G) \ge 2^\al + n - \al$. Equality occurs if and only if $G$ is a complete split graph (see Section~\ref{sec_prelim}). In this article, we determine upper bounds for $\F(G)$ in the classes ${\cal G}(n,\al)$ and ${\cal C}(n,\al)$. In both cases, the bound is tight for every possible value of $\al$ and $n$ and the extremal graphs are characterized.

A Tur\'an graph is the union of disjoint balanced cliques. Tur\'an graphs frequently appear in extremal graph theory. For example, the well-known Theorem of Tur\'an~\cite{Turan} states that these graphs have minimum size inside ${\cal G}(n,\al)$. We show in Section~\ref{sec_graph} that Tur\'an graphs have also  maximum Fibonacci index inside ${\cal G}(n,\al)$. Observe that removing an edge in a graph strictly increases its Fibonacci index. Indeed, all existing stable sets remain and there is at least one more new stable set: the two vertices incident to the deleted edge. Therefore, we might have the intuition that the upper bound for $\F(G)$ is a simple consequence of the Theorem of Tur\'an. However, we show that it is not true (see Sections~\ref{sec_prelim} and \ref{sec_conc}). The proof uses structural properties of $\al$-critical graphs. 

Graphs in ${\cal C}(n,\al)$ which maximize $\F(G)$ are characterized in Section~\ref{sec_conn}. We call them Tur\'an-connected graphs since they are a connected variant of Tur\'an graphs. It is interesting to note that these graphs again minimize the size inside ${\cal C}(n,\al)$. Hence, our results lead to questions about the relations between the Fibonacci index, the stability number, the size and the order of graphs. These questions are summarized in Section~\ref{sec_conc}.

In Section~\ref{sec_poly}, we further extend our results by a polyhedral study of the relations among the stability number and the Fibonacci index. Indeed, we state all the optimal linear inequalities for the stability number and the Fibonacci index, inside the classes of general and connected graphs of order $n$. 

The major part of the results of this article has been published in Ref.~\cite{GphTuran}.

\section{Basic properties} \label{sec_prelim}

In this section, we suppose that the reader is familiar with usual notions of graph theory (we refer to Berge~\cite{Berge01} for more details). First, we fix our terminology and notation. We then recall the notion of $\al$-critical graphs and give properties of such graphs, used in the next sections. We end with some basic properties of the Fibonacci index of a graph.

\subsection{Notations}

Let $G=(V,E)$ be a simple and undirected graph order $n(G) = | V |$ and size $m(G) = |E|$.  For a vertex $v \in V(G)$, we denote by $N(v)$ the neighborhood of $v$; its closed neighborhood is defined as $\mathcal{N}(v) = N(v) \cup \{v\}$. The degree of a vertex $v$ is denoted by $d(v)$ and the maximum degree of~$G$ by $\Delta(G)$. We use notation $G \simeq H$ when $G$ and $H$ are isomorphic graphs. The complement of $G$ is denoted by $\overline{G}$.

The \emph{stability number} $\al(G)$ of a graph $G$ is the number of vertices of a maximum stable set of $G$. Clearly, $1 \leq \al(G) \leq n(G)$, and $1 \leq \al(G) \leq n(G)-1$ when $G$ is connected.

\begin{defn}
We denote by $\gp{G}{v}$ the induced subgraph obtained by removing a vertex $v$ from a graph $G$. Similarly, the graph $\gpp{G}{v}$ is the induced subgraph obtained by removing the closed neighborhood of $v$. Finally, the graph obtained by removing an edge $e$ from $G$ is denoted by $\gme{G}{e}$.
\end{defn}

Classical graphs of order $n$ are used in this article: the complete graph $\K{n}$, the path $\Path{n}$, the cycle $\C{n}$, the star $\Star{n}$ (composed by one vertex adjacent to $n-1$ vertices of degree 1) and the complete split graph $\CS{n}{\al}$ (composed of a stable set of $\al$ vertices, a clique of $n-\al$ vertices and each vertex of the stable set is adjacent to each vertex of the clique). The complete split graph $\CS{7}{3}$ is depicted in Figure~\ref{fig_exagr}.

We also deeply study the two classes of Tur\'an graphs and Tur\'an-connected graphs. A \emph{Tur\'an graph} $\tur{n}{\al}$ is a graph of order $n$ and a stability number $\al$ such that $1 \leq \al \leq n$, that is defined as follows. It is the union of $\al$ disjoint balanced cliques (that is, such that their orders differ from at most one)~\cite{Turan}. These cliques have thus $\lceil \frac{n}{\al} \rceil$ or $\lfloor \frac{n}{\al} \rfloor$ vertices. We now define a \emph{Tur\'an-connected graph} $\turc{n}{\al}$ with $n$ vertices and a stability number $\al$ where $1 \leq \al \leq n-1$. It is constructed from the Tur\'an graph $\tur{n}{\al}$ with $\al-1$ additional edges. Let $v$ be a vertex of one clique of size $\lceil \frac{n}{\al} \rceil$, the additional edges link $v$ and one vertex of each remaining cliques. Note that, for each of the two classes of graphs defined above, there is only one graph with given values of $n$ and $\al$, up to isomorphism. 

\begin{exa} \label{exa_Turan}
Figure \ref{fig_exagr} shows the Tur\'an graph $\tur{7}{3}$ and the Tur\'an-connected graph $\turc{7}{3}$. When $\al = 1$, we observe that $\tur{n}{1} \simeq \turc{n}{1} \simeq  \CS{n}{1}  \simeq \K{n}$. When $\al = n$, we have $\tur{n}{n} \simeq \CS{n}{n} \simeq  \overline{\K{n}}$, and when $\al = n-1$, we have $\turc{n}{n-1} \simeq \CS{n}{n-1}  \simeq \Star{n}$.
\end{exa}
\begin{figure}[!ht]
\begin{center}
\includegraphics{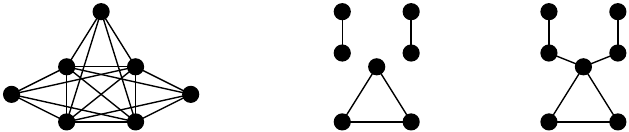}
\caption{The graphs $\CS{7}{3}$, $\tur{7}{3}$ and $\turc{7}{3}$} \label{fig_exagr}
\end{center}
\end{figure}

\subsection{$\al$-critical graphs}

We recall the notion of $\al$-critical graphs~\cite{ErdG61, Joret07, Lovasz86}. An edge $e$ of a graph $G$ is $\al$\emph{-critical} if $\al(\gme{G}{e}) > \al(G)$, otherwise it is called $\al$\emph{-safe}. A graph is said to be $\al$\emph{-critical} if all its edges are $\al$-critical. By convention, a graph with no edge is also $\al$-critical. These graphs play an important role in extremal graph theory~\cite{Joret07}, and also in our proofs.

\begin{exa} \label{exa_critical}
Simple examples of $\al$-critical graphs are complete graphs and odd cycles. Tur\'an graphs are also $\al$-critical. On the contrary, Tur\'an-connected graph are not $\al$-critical, except when $\al =  1$.
\end{exa}

We state some interesting properties of $\al$-critical graphs.

\begin{lem} \label{lem_crit1}
Let $G$ be an $\al$-critical graph. If $G$ is connected, then the graph $\gp{G}{v}$ is connected for all vertices $v$ of $G$.
\end{lem}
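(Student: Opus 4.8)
I would argue by contradiction: assume $G$ is connected and $\al$-critical but that $\gp{G}{v}$ is disconnected for some vertex $v$, and construct a stable set of $G$ of size $\al(G)+1$. Write the components of $\gp{G}{v}$ as $C_1, \dots, C_k$ with $k \ge 2$. Since $G$ is connected, $v$ has at least one neighbour in each $C_i$. Because removing a vertex cannot increase the stability number, and because $\gp{G}{v}$ is the disjoint union of the $C_i$, one has $\al(\gp{G}{v}) = \sum_i \al(C_i) \le \al(G)$; the counting below will in fact force equality.

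The crux I would establish is the following sub-claim: for every index $i$, the component $C_i$ contains a maximum stable set that avoids $N(v)$ entirely. To prove it, fix $i$, choose another component $C_{i'}$ with $i' \ne i$ (which exists since $k \ge 2$) and a neighbour $a$ of $v$ lying in $C_{i'}$. The edge $va$ is $\al$-critical, so $\al(\gme{G}{va}) = \al(G)+1$ and there is a stable set $S$ of $\gme{G}{va}$ of size $\al(G)+1$. As $S$ fails to be stable in $G$, it must contain both endpoints of $va$, and $va$ is then the unique edge of $G$ inside $S$; hence $S \setminus \{v\}$ is a stable set of $G$ of size $\al(G)$ contained in $\bigsqcup_j V(C_j)$. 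Counting over the components gives $\al(G) = |S \setminus \{v\}| = \sum_j |S \cap V(C_j)| \le \sum_j \al(C_j) = \al(\gp{G}{v}) \le \al(G)$, so equality holds throughout and each $S \cap V(C_j)$ is a maximum stable set of $C_j$. Moreover, since $v \in S$ and the only bad edge $va$ lies in $C_{i'}$, any vertex of $S$ belonging to a component other than $C_{i'}$ must be a non-neighbour of $v$ (otherwise it would create a second edge of $\gme{G}{va}$ inside $S$). In particular $S \cap V(C_i)$ is a maximum stable set of $C_i$ that avoids $N(v)$, which is the sub-claim.

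I would then apply the sub-claim to every $i$, obtaining for each component $C_i$ a maximum stable set $U_i$ avoiding $N(v)$, and set $W = \{v\} \cup \bigcup_i U_i$. This $W$ is stable: there are no edges between distinct components, each $U_i$ avoids $N(v)$, and so $v$ is non-adjacent to all of $\bigcup_i U_i$. Its size is $|W| = 1 + \sum_i \al(C_i) = 1 + \al(\gp{G}{v}) = 1 + \al(G)$, using the equality $\al(\gp{G}{v}) = \al(G)$ forced above. This contradicts the maximality of $\al(G)$. Hence no vertex of $G$ can be a cut vertex, i.e.\ $\gp{G}{v}$ is connected for every vertex $v$.

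The hard part will be the sub-claim, and within it the need to reach the very component that contains the endpoint $a$ of the chosen critical edge: the argument only shows $S \cap V(C_j)$ avoids $N(v)$ for $j \ne i'$, so one must be careful to pick, for each target component $C_i$, a critical edge whose endpoint lies in a \emph{different} component — which is exactly what the hypothesis $k \ge 2$ permits. A secondary point requiring care is the upgrade from ``a stable set avoiding $N(v)$'' to ``a \emph{maximum} one''; this is not automatic and relies on the global count being tight, which in turn is what also yields $\al(\gp{G}{v}) = \al(G)$ and makes the final cardinality jump to $\al(G)+1$ possible.
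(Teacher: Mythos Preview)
Your proof is correct and takes a genuinely different route from the paper. The paper disposes of the lemma in two lines by citing two structural facts about $\al$-critical graphs from Lov\'asz--Plummer: that a degree-$1$ vertex together with its neighbour must form an entire connected component, and that every vertex of degree at least $2$ lies on a cycle; from these it infers that a connected $\al$-critical graph other than $\K{2}$ has minimum degree $2$ and no cut vertex. Your argument, by contrast, is entirely self-contained: using only the defining property of $\al$-criticality, you manufacture for each component $C_i$ of $\gp{G}{v}$ a maximum stable set disjoint from $N(v)$ (by deleting a critical edge $va$ landing in a \emph{different} component and reading off the resulting large stable set), then glue these with $v$ to reach size $\al(G)+1$. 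The paper's approach is quicker once the black-box results are granted, while yours is more elementary, needs no external references, and makes explicit exactly which critical edges are being exploited; your care in choosing $a$ in a component $C_{i'}\neq C_i$ and in upgrading to a \emph{maximum} stable set via the tight global count is precisely what makes the argument go through.
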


\begin{proof}
We use two known results on $\al$-critical graphs (see, e.g., \cite[Chapter 12]{Lovasz86}). If a vertex $v$ of an $\al$-critical graph has degree 1, then $v$ and its neighbor $w$ form a connected component of the graph. Every vertex of degree at least 2 in an $\al$-critical graph is contained in a cycle. 

Hence, by the first result, the minimum degree of $G$ equals 2, except if $G \simeq \K{2}$. Clearly $\gp{G}{v}$ is connected by the second result or when $G \simeq \K{2}$.
\end{proof}

\begin{lem} \label{lem_crit}
Let $G$ be an $\al$-critical graph. Let $v$ be any vertex of $G$ which is not isolated. Then,
\[
 \al(G) = \al(\gp{G}{v}) = \al(\gpp{G}{v}) + 1.
\]
\end{lem}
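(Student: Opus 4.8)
The plan is to extract, from the $\al$-criticality of a single well-chosen edge incident to $v$, two different maximum stable sets of $G$: one that avoids $v$, which yields $\al(G)=\al(\gp{G}{v})$, and one that contains $v$, which yields $\al(G)=\al(\gpp{G}{v})+1$. Both equalities will then follow by pairing these constructions with the corresponding trivial bounds.

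First I would fix a neighbour $w$ of $v$ (which exists because $v$ is not isolated) and focus on the edge $e=vw$. Since $G$ is $\al$-critical, $e$ is $\al$-critical, so $\al(\gme{G}{e})>\al(G)$. Deleting one edge can raise the stability number by at most one (a stable set of $\gme{G}{e}$ of size exceeding $\al(G)$ must use both endpoints of $e$, and dropping one endpoint returns a stable set of $G$), hence $\al(\gme{G}{e})=\al(G)+1$. Let $S$ be a maximum stable set of $\gme{G}{e}$, so $|S|=\al(G)+1$. The key observation is that $S$ must contain both $v$ and $w$: otherwise $S$ would already be stable in $G$, contradicting $|S|>\al(G)$.

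Next I would read off the two equalities. The set $S\setminus\{v\}$ is stable in $G$ (removing $v$ kills the only forbidden edge $e$), has size $\al(G)$, and avoids $v$, so it is a maximum stable set of $\gp{G}{v}$; combined with the trivial bound $\al(\gp{G}{v})\le\al(G)$ coming from $\gp{G}{v}$ being an induced subgraph of $G$, this gives $\al(\gp{G}{v})=\al(G)$. Symmetrically, $S\setminus\{w\}$ is a stable set of $G$ of size $\al(G)$ that \emph{contains} $v$; deleting $v$ from it leaves a stable set of size $\al(G)-1$ all of whose vertices lie outside $N[v]$, so $\al(\gpp{G}{v})\ge\al(G)-1$. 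For the reverse inequality I would adjoin $v$ to any maximum stable set of $\gpp{G}{v}$ to obtain a stable set of $G$, giving $\al(\gpp{G}{v})+1\le\al(G)$. Together these force $\al(\gpp{G}{v})=\al(G)-1$, which completes the proof.

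The only genuinely delicate point is the lower bound $\al(\gpp{G}{v})\ge\al(G)-1$, that is, exhibiting a maximum stable set which actually uses $v$; this is precisely what $\al$-criticality supplies through the set $S\setminus\{w\}$. Without this hypothesis one only has the general identity $\al(G)=\max\bigl(\al(\gp{G}{v}),\,\al(\gpp{G}{v})+1\bigr)$, and the second equality can fail. Everything else is the routine ``with $v$ / without $v$'' case analysis for stable sets.
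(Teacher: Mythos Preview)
Your proof is correct and follows essentially the same approach as the paper: pick an edge $e=vw$ incident to $v$ and use its $\al$-criticality to produce a maximum stable set of $G$ containing $v$ (giving the equality for $\gpp{G}{v}$) and one avoiding $v$ (giving the equality for $\gp{G}{v}$). The only difference is cosmetic: the paper quotes the existence of these two maximum stable sets from Lov\'asz--Plummer, whereas you derive them directly as $S\setminus\{w\}$ and $S\setminus\{v\}$ from a maximum stable set $S$ of $\gme{G}{e}$, making your argument self-contained.
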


\begin{proof}
Let $e =vw$ be an edge of $G$ containing $v$. Then, there exist in $G$ two maximum stable sets $S$ and $S'$, such that $S$ contains $v$, but not $w$, and $S'$ contains $w$, but not $v$ (see, e.g., \cite[Chapter 12]{Lovasz86}). Thus, $\al(G) = \al(\gp{G}{v})$ due to the existence of $S'$. The set $S$ avoids each vertex of $N(v)$. Hence, $S \setminus \{v\}$ is a stable set of the graph $\gpp{G}{v}$ of size $\al(G) - 1$. It is easy to check that this stable set is maximum.
\end{proof}

\subsection{Fibonacci index}

Let us now recall the Fibonacci index of a graph~\cite{PT82, MS89}. The \emph{Fibonacci index} $\F(G)$ of  a graph~$G$ is the number of all the stable sets in $G$, including the empty set. The following lemma about $\F(G)$ is well-known (see \cite{GP86, LZG05, PT82}). It is used intensively through the article.

\begin{lem} \label{lem_main1}
Let $G$ be a graph. 
\begin{itemize}
\item Let $e$ be an edge of $G$, then $\F(G) < \F(\gme{G}{e})$.
\item Let $v$ be a vertex of $G$, then $\F(G) = \F(\gp{G}{v}) + \F(\gpp{G}{v})$.
\item
If $G$ is the union of $k$ disjoint graphs $G_i$, $1\leq i\leq k$, then $\F(G) = \prod_{i=1}^k \F(G_i)$.
\end{itemize}
\end{lem}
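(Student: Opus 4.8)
The plan is to prove each of the three claims of Lemma~\ref{lem_main1} directly from the definition of $\F(G)$ as the number of stable sets (including $\emptyset$), using only elementary counting. Throughout, I identify $\F(G)$ with the collection of stable sets of $G$ and argue by partitioning this collection according to membership of a distinguished vertex.

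\medskip
\noindent\textbf{First claim.} Let $e=xy$ be an edge of $G$. Every stable set of $G$ is also a stable set of $\gme{G}{e}$, since deleting an edge only removes an adjacency constraint. Hence there is an injection from the stable sets of $G$ into those of $\gme{G}{e}$, giving $\F(G)\le\F(\gme{G}{e})$. To get strict inequality I would exhibit a stable set of $\gme{G}{e}$ that is not stable in $G$: the pair $\{x,y\}$ is stable in $\gme{G}{e}$ (the only edge joining them in $G$ has been removed) but not in $G$. Therefore the injection is not surjective and $\F(G)<\F(\gme{G}{e})$.

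\medskip
\noindent\textbf{Second claim.} Fix a vertex $v$ and partition the stable sets $S$ of $G$ into those with $v\notin S$ and those with $v\in S$. If $v\notin S$, then $S$ is exactly a stable set of the induced subgraph $\gp{G}{v}$, and conversely every stable set of $\gp{G}{v}$ is a stable set of $G$ avoiding $v$; this block is counted by $\F(\gp{G}{v})$. If $v\in S$, then $S$ contains no neighbour of $v$, so $S\setminus\{v\}$ is a stable set of $\gpp{G}{v}$ (the graph with the closed neighbourhood of $v$ removed); conversely, adjoining $v$ to any stable set of $\gpp{G}{v}$ yields a stable set of $G$ containing $v$. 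This gives a bijection between the second block and the stable sets of $\gpp{G}{v}$, so the block has size $\F(\gpp{G}{v})$. Summing the two disjoint blocks yields $\F(G)=\F(\gp{G}{v})+\F(\gpp{G}{v})$.

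\medskip
\noindent\textbf{Third claim.} Suppose $G$ is the disjoint union of $G_1,\dots,G_k$, meaning $V(G)=\bigcup_i V(G_i)$ with no edges between distinct parts. A set $S\subseteq V(G)$ is stable in $G$ if and only if each intersection $S\cap V(G_i)$ is stable in $G_i$, because any edge of $G$ lies entirely within a single $G_i$, so stability decomposes componentwise with no cross-constraints. This sets up a bijection between stable sets of $G$ and tuples $(S_1,\dots,S_k)$ with $S_i$ stable in $G_i$, given by $S\mapsto(S\cap V(G_1),\dots,S\cap V(G_k))$. Counting the tuples by the product rule gives $\F(G)=\prod_{i=1}^k \F(G_i)$.

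\medskip
\noindent I do not anticipate a serious obstacle here, since all three statements are standard bijective counting arguments; the only point requiring slight care is verifying both directions of each correspondence in the second and third claims (that the constructed maps genuinely land in the stable sets of the target graph and are mutually inverse), and remembering that the empty set is included as a stable set so the base cases and the identity $S\setminus\{v\}$ being possibly empty cause no trouble.
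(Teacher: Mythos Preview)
Your proof is correct; each of the three parts is the standard bijective counting argument and nothing is missing. Note that the paper does not actually prove this lemma at all---it merely records it as well known with references to \cite{GP86, LZG05, PT82}---so there is no ``paper's own proof'' to compare against; your write-up supplies exactly the elementary argument one would expect behind those citations.
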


\begin{exa} \label{exa_F}
We have $\F(\K{n}) = n+1$, $\F(\overline{\K{n}}) = 2^n$, $\F(\Star{n}) = 2^{n-1} + 1$ and $\F(\Path{n}) = f_{n+2}$ (recall that the sequence of Fibonacci numbers $f_n$ is $f_0 = 0, f_1= 1$ and $f_n = f_{n-1} + f_{n-2}$  for $n > 1$).
\end{exa}

Prodinger and Tichy~\cite{PT82} give simple lower and upper bounds for the Fibonacci index. We recall these bounds in the next lemma. 
\begin{lem} \label{lem_main2}
Let $G$ be a graph of order $n$.
\begin{itemize}
\item Then $n+1 \le \F(G) \le 2^n$ with equality if and only if $G \simeq \K{n}$ (lower bound) and $G \simeq \overline{\K{n}}$ (upper bound).
\item If $G$ is connected, then $n+1 \le \F(G) \le 2^{n-1} + 1$ with equality if and only if $G \simeq \K{n}$ (lower bound) and $G \simeq \Star{n}$ (upper bound).
\item If $G$ is a tree, then $f_{n+2} \le \F(G) \le 2^{n-1} + 1$ with equality if and only if $G \simeq \Path{n}$ (lower bound) and $G \simeq \Star{n}$ (upper bound).
\end{itemize}
\end{lem}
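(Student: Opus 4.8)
The plan is to handle the three parts in increasing order of difficulty, the engine throughout being the vertex-deletion recursion $\F(G) = \F(\gp{G}{v}) + \F(\gpp{G}{v})$ and the strict monotonicity $\F(G) < \F(\gme{G}{e})$ from Lemma~\ref{lem_main1}, combined with the closed-form values in Example~\ref{exa_F}.

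For general graphs the bounds are pure counting. Every stable set is a subset of $V$, so $\F(G) \le 2^n$, and both $\emptyset$ and the $n$ singletons are stable, so $\F(G) \ge n+1$. Equality on the left forces every subset to be stable, i.e.\ $G$ has no edge, so $G \simeq \overline{\K{n}}$; equality on the right forces no two vertices to be nonadjacent, so $G \simeq \K{n}$. For connected graphs the lower bound is inherited, with $\K{n}$ being connected and extremal. The connected upper bound I would reduce to the tree upper bound: choosing a spanning tree $T \subseteq G$ and deleting the surplus edges one at a time strictly increases $\F$ by Lemma~\ref{lem_main1}, so $\F(G) \le \F(T)$ with equality iff $G = T$; composing this with $\F(T) \le 2^{n-1}+1$ (and its equality case $T \simeq \Star{n}$) gives the result.

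The heart is therefore the tree statement, which I would prove by induction on $n$ using a leaf $v$ with neighbor $w$, so that $\gp{T}{v}$ is a tree on $n-1$ vertices and $\gpp{T}{v} = T - \{v,w\}$ is a forest on $n-2$ vertices. For the upper bound, the recursion together with the inductive hypothesis and the general upper bound yields $\F(T) = \F(\gp{T}{v}) + \F(\gpp{T}{v}) \le (2^{n-2}+1) + 2^{n-2} = 2^{n-1}+1$. For the lower bound the term $\gpp{T}{v}$ is only a forest, so I would first strengthen the claim and prove that every forest $F$ on $m$ vertices satisfies $\F(F) \ge f_{m+2}$, with equality iff $F \simeq \Path{m}$; this makes the recursion self-contained, giving $\F(T) \ge f_{n+1} + f_n = f_{n+2}$, and the tree bound is then the connected case of the forest bound. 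The forest inequality is itself proved by induction: an isolated vertex satisfies $\F(F) = 2\,\F(F - v) \ge 2 f_{m+1} > f_{m+2}$ (a strict inequality, ruling out such forests from equality), while in the absence of isolated vertices a leaf recursion again reduces to smaller forests.

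The main obstacle I anticipate is the equality analysis, where the local recursive data must be shown to pin down the global graph. In the tree upper bound, equality forces $\gp{T}{v} \simeq \Star{n-1}$ (by induction) and $\gpp{T}{v}$ edgeless (by the general upper bound); the only way deleting $w$ from $\Star{n-1}$ can leave no edge is for $w$ to be its center, and reattaching the leaf $v$ to that center reassembles $T$ as $\Star{n}$. In the forest lower bound, equality forces $\gp{F}{v} \simeq \Path{m-1}$ and $\gpp{F}{v} \simeq \Path{m-2}$; the latter requires removing $w$ from $\Path{m-1}$ to leave a single path, which happens only when $w$ is an endpoint, so that $F \simeq \Path{m}$. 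These structural deductions need an appropriate choice of leaf and direct verification of the small cases $n \le 3$, where the extremal path and star coincide and the ``center/endpoint'' arguments degenerate.
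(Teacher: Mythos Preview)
The paper does not supply its own proof of this lemma: it is stated as a recollection of results of Prodinger and Tichy~\cite{PT82}, with no argument given. There is therefore nothing in the paper to compare your proposal against.

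Your proposal is correct. The counting argument for general graphs is the standard one; the reduction of the connected upper bound to the tree case via a spanning tree and the strict monotonicity of Lemma~\ref{lem_main1} is clean and gives the equality characterization for free. For trees, applying the recursion at a leaf is exactly the classical route, and your extension of the lower bound to forests is the natural way to close the induction (since $\gpp{T}{v}$ need not be connected). The equality analyses are accurate: in the upper bound, $\gpp{T}{v}$ edgeless forces $w$ to be the center of $\Star{n-1}$; in the lower bound, $\gpp{F}{v}\simeq\Path{m-2}$ forces $w$ to be an endpoint of $\Path{m-1}$. One small remark on your isolated-vertex step: $2f_{m+1}-f_{m+2}=f_{m-1}$ vanishes when $m=1$, so the inequality is only strict for $m\ge 2$; but at $m=1$ the lone vertex \emph{is} $\Path{1}$, so this causes no trouble. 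Your caveat about handling $n\le 3$ by direct inspection is warranted and sufficient, since for those $n$ the path and the star coincide.
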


We denote by ${\cal G}(n,\al)$ the class of general graphs with order $n$ and stability number $\al$; and by ${\cal C}(n,\al)$ the class of connected graphs with order $n$ and stability number $\al$. Pedersen and Vestergaard~\cite{PV06} characterize graphs with minimum Fibonacci index as indicated in the following theorem.
\begin{thm} \label{thm_lb}
Let $G$ be a graph inside ${\cal G}(n,\al)$ or ${\cal C}(n,\al)$, then 
\[
\F(G) \ge 2^\al + n - \al,
\]
with equality if and only if $G \simeq \CS{n}{\al}$.
\end{thm}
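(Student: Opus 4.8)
The plan is to prove the inequality by a direct enumeration of stable sets organized around a maximum stable set, and to read off the equality condition from the same count. This avoids any induction and needs none of the deletion identities of Lemma~\ref{lem_main1}.

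First I would evaluate $\F$ at the candidate extremal graph. In $\CS{n}{\al}$ a stable set either misses the clique, hence is an arbitrary subset of the $\al$-vertex stable side ($2^\al$ choices), or it contains a clique vertex $v$; since $v$ is adjacent to all other clique vertices and to all stable-side vertices, such a stable set must equal $\{v\}$, giving $n-\al$ further choices. Thus $\F(\CS{n}{\al}) = 2^\al + n - \al$, so the bound is attained.

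For the lower bound, fix $G \in {\cal G}(n,\al)$, let $S$ be a maximum stable set (so $|S| = \al$), and put $T = V(G)\setminus S$ with $|T| = n-\al$. All $2^\al$ subsets of $S$ are stable, and each singleton $\{v\}$ with $v \in T$ is stable; these two families are disjoint, since subsets of $S$ avoid $T$ while the singletons lie in $T$. Hence $\F(G) \ge 2^\al + (n-\al)$. As ${\cal C}(n,\al) \subseteq {\cal G}(n,\al)$, the same bound holds for connected graphs, and $\CS{n}{\al}$ is connected precisely when $\al \le n-1$, which is the admissible range there.

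The substantive part is the equality case. I would partition the stable sets into those contained in $S$ (always exactly $2^\al$, regardless of $G$) and those meeting $T$; equality holds iff the latter family reduces to the $n-\al$ singletons of $T$. Any ``extra'' stable set meeting $T$ must contain either two non-adjacent vertices of $T$, or a vertex of $T$ together with a non-adjacent vertex of $S$. Forbidding both configurations forces $T$ to be a clique that is completely joined to $S$, i.e.\ $G \simeq \CS{n}{\al}$; conversely that graph was shown above to attain equality. The only point requiring care --- and the main obstacle --- is verifying that these two forbidden pairs exhaust all possible extra stable sets: once $T$ is a clique complete to $S$, any stable set meeting $T$ can use at most one vertex of $T$ and no vertex of $S$, hence is a singleton, so no further stable sets can occur.
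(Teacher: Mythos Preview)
Your argument is correct. The counting around a maximum stable set $S$ cleanly produces the $2^\al$ subsets of $S$ and the $n-\al$ singletons from $T = V(G)\setminus S$, and the equality analysis is sound: if no stable set of size $\ge 2$ meets $T$, then every pair inside $T$ and every pair crossing $T\times S$ must be an edge, forcing $G \simeq \CS{n}{\al}$.

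Note, however, that the paper does \emph{not} supply its own proof of this theorem: it is quoted from Pedersen and Vestergaard~\cite{PV06}, with the remark that they ``give a simple proof''. So there is no in-paper argument to compare against. Your direct enumeration is precisely the kind of elementary argument one would expect behind that citation; it is self-contained, uses none of the machinery (Lemma~\ref{lem_main1}, $\al$-critical structure) developed for the paper's own results, and would serve perfectly well as an inserted proof if the authors had chosen to include one.
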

The aim of this article is the study of graphs with maximum Fibonacci index inside the two classes ${\cal G}(n,\al)$ and ${\cal C}(n,\al)$. The system GraPHedron~\cite{GphDesc} allows a formal framework to conjecture optimal relations among a set of graph invariants. Thanks to this system, graphs with maximum Fibonacci index inside each of the two previous classes have been computed for small values of $n$~\cite{GPHRepFiboAlpha}. We observe that these graphs are isomorphic to Tur\'an graphs for the class ${\cal G}(n,\al)$, and to Tur\'an-connected graphs for the class ${\cal C}(n,\al)$. For the class ${\cal C}(n,\al)$, there is one exception when $n = 5$ and $\al = 2$: both the cycle $\C{5}$ and the graph $\turc{5}{2}$ have maximum Fibonacci index. 

Recall that the classical Theorem of Tur\'an~\cite{Turan} states that Tur\'an graphs $\tur{n}{\al}$ have minimum size inside ${\cal G}(n,\al)$. We might think that Tur\'an graphs have maximum Fibonacci index inside ${\cal G}(n,\al)$ as a direct corollary of the Theorem of Tur\'an and Lemma~\ref{lem_main1}. This argument is not correct since removing an $\al$-critical edge increases the stability number. Therefore,  Lemma~\ref{lem_main1} only implies that graphs with maximum Fibonacci index inside ${\cal G}(n,\al)$ are $\al$-critical graphs. In Section~\ref{sec_conc}, we make further observations on the relations between the size and the Fibonacci index inside the classes ${\cal G}(n,\al)$ and ${\cal C}(n,\al)$.

There is another interesting property of Tur\'an graphs related to stable sets. Byskov~\cite{Byskov04} establish that Tur\'an graphs have maximum number of maximal stable sets inside ${\cal G}(n,\al)$. The Fibonacci index counts not only the maximal stable sets but all the stable sets. Hence, the fact that Tur\'an graphs maximize $F(G)$ cannot be simply derived from the result of Byskov.

\section{General graphs}  \label{sec_graph}

In this section, we study graphs with maximum Fibonacci index inside the class ${\cal G}(n,\al)$. These graphs are said to be \emph{extremal}. For fixed values of $n$ and $\al$, we show that there is one extremal graph up to isomorphism, the Tur\'an graph $\tur{n}{\al}$ (see Theorem~\ref{thm_genUB}). 

Before establishing this result, we need some auxiliary results. We denote by $\fT(n,\al)$ the Fibonacci index of the Tur\'an graph $\tur{n}{\al}$. By Lemma~\ref{lem_main1}, its value is equal to
\[
\fT(n,\al) = \left(\left\lceil\frac{n}{\al}\right\rceil + 1\right)^p \  \left(\left\lfloor\frac{n}{\al} \right\rfloor + 1 \right)^{\al-p},
\]
where $p = (n \mod \al)$. We have also the following inductive formula.

\begin{lem} \label{lem_ind}
Let $n$ and $\al$ be integers such that $1 \le \al \le n$. Then
\[
\fT(n,\al) = \left\{ 
\begin{array}{l l l}
n+1 & \textrm{if} & \al = 1,\\
2^n & \textrm{if} & \al = n,\\
\fT(n-1,\al) +  \fT(n-\left\lceil \frac{n}{\al}\right\rceil,\al -1) & \textrm{if} & 2 \le \al \le n-1.\\
\end{array}
\right.
\]
\end{lem}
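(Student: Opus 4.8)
The plan is to treat the two boundary values of $\al$ as base cases and to derive the recurrence for $2\le\al\le n-1$ directly from the vertex-deletion identity of Lemma~\ref{lem_main1}, namely $\F(G)=\F(\gp{G}{v})+\F(\gpp{G}{v})$. The two base cases are immediate: $\tur{n}{1}\simeq\K{n}$, so $\fT(n,1)=n+1$ by Example~\ref{exa_F}, and $\tur{n}{n}\simeq\overline{\K{n}}$, so $\fT(n,n)=2^n$, again by Example~\ref{exa_F}.

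For the recurrence, I would fix $n$ and $\al$ with $2\le\al\le n-1$ and choose a vertex $v$ lying in one of the cliques of maximum order $\lceil n/\al\rceil$ (such a clique always exists, whether or not $\al$ divides $n$). Since the cliques of $\tur{n}{\al}$ are precisely its connected components, the neighborhood $N(v)$ consists exactly of the remaining vertices of the clique containing $v$, so its closed neighborhood $\mathcal N(v)$ is that whole clique. I would then read off the two graphs of Lemma~\ref{lem_main1}: deleting $v$ leaves $\al$ disjoint cliques on $n-1$ vertices, whereas deleting $\mathcal N(v)$ erases the entire clique of $v$ and leaves $\al-1$ disjoint cliques on $n-\lceil n/\al\rceil$ vertices.

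The only genuine point to verify, and the step I expect to be the main obstacle although it is routine, is that each residual graph is again a \emph{balanced} union of cliques, hence the Turán graph with the claimed parameters. For $\gp{\tur{n}{\al}}{v}$ the clique orders are unchanged except that one maximum clique drops by one, so they still differ by at most one; because a disjoint union of $\al$ cliques whose orders differ by at most one on a total of $N$ vertices is, by the uniqueness noted when $\tur{N}{\al}$ was defined, isomorphic to $\tur{N}{\al}$, we obtain $\gp{\tur{n}{\al}}{v}\simeq\tur{n-1}{\al}$ and thus $\F(\gp{\tur{n}{\al}}{v})=\fT(n-1,\al)$. For $\gpp{\tur{n}{\al}}{v}$ an entire maximum clique is deleted, so the surviving orders lie in $\{\lfloor n/\al\rfloor,\lceil n/\al\rceil\}$ and again differ by at most one, giving $\gpp{\tur{n}{\al}}{v}\simeq\tur{n-\lceil n/\al\rceil}{\al-1}$ and $\F(\gpp{\tur{n}{\al}}{v})=\fT(n-\lceil n/\al\rceil,\al-1)$. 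Substituting these two evaluations into the vertex-deletion identity yields $\fT(n,\al)=\fT(n-1,\al)+\fT(n-\lceil n/\al\rceil,\al-1)$. As a sanity check I would confirm that both arguments stay in range, i.e.\ $1\le\al\le n-1$ and $1\le\al-1\le n-\lceil n/\al\rceil$, which holds since $\al\ge2$ and each of the $\al-1$ surviving cliques is nonempty.
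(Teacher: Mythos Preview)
Your proof is correct and follows essentially the same approach as the paper: both handle the boundary cases via Example~\ref{exa_F} and obtain the recurrence by applying the vertex-deletion identity of Lemma~\ref{lem_main1} to a vertex in a clique of maximum order. You are in fact more explicit than the paper in checking that the two residual graphs remain balanced unions of cliques and that the parameters stay in range, which the paper simply asserts.
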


\begin{proof}
The cases $\al = 1$ and $\al = n$ are trivial (see Example~\ref{exa_F}). Suppose $2 \le \al \le n-1$. Let $v$ be a vertex of $\tur{n}{\al}$ with maximum degree. Thus $v$ is in a $\left\lceil\frac{n}{\al}\right\rceil$-clique. As $\al < n$, the vertex $v$ is not isolated. Therefore $\gp{\tur{n}{\al}}{v} \simeq \tur{n-1}{\al}$. As $\al \geq 2$, the graph $\gpp{\tur{n}{\al}}{v}$ has at least one vertex, and $\gpp{\tur{n}{\al}}{v} \simeq \tur{n-\left\lceil \frac{n}{\al}\right\rceil}{\al -1}$. By Lemma~\ref{lem_main1}, we obtain
\[
\fT(n,\al) = \fT(n-1,\al) +  \fT(n-\left\lceil \frac{n}{\al}\right\rceil,\al -1). \qedhere
\]
\end{proof}

A consequence of Lemma~\ref{lem_ind} is that $\fT(n-1,\al) < \fT(n,\al)$. Indeed, the cases $\al = 1$ and $\al = n$ are trivial, and the term $\fT(n-\left\lceil \frac{n}{\al}\right\rceil,\al -1)$ is always strictly positive when $2 \le \al \le n-1$. 

\begin{cor} \label{cor_f1inc}
The function $\fT(n,\al)$ is strictly increasing in $n$ when $\al$ is fixed.
\end{cor}

We now state the upper bound on the Fibonacci index of graphs in the class ${\cal G}(n,\al)$.

\begin{thm} \label{thm_genUB}
Let $G$ be a graph of order $n$ with a stability number $\al$, then
\[
\F(G) \le \fT(n,\al),
\]
with equality if and only if $G \simeq \tur{n}{\al}$.
\end{thm}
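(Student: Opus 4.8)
The plan is to prove the statement by strong induction on $n$, simultaneously handling the inequality and the equality characterization for every admissible $\al$. Let $P(n)$ denote the full assertion: for all $\al$ with $1\le\al\le n$ and all $G\in{\cal G}(n,\al)$, one has $\F(G)\le\fT(n,\al)$, with equality if and only if $G\simeq\tur{n}{\al}$. The boundary values are immediate: the only graph in ${\cal G}(n,1)$ is $\K{n}$ and the only one in ${\cal G}(n,n)$ is $\overline{\K{n}}$, whose Fibonacci indices are $n+1=\fT(n,1)$ and $2^n=\fT(n,n)$ (Example~\ref{exa_F}, Lemma~\ref{lem_ind}). So throughout the inductive step I may assume $2\le\al\le n-1$.

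First I would reduce to $\al$-critical graphs. If $G$ has an $\al$-safe edge $e$, then $\gme{G}{e}$ still lies in ${\cal G}(n,\al)$ while $\F(G)<\F(\gme{G}{e})$ by Lemma~\ref{lem_main1}. Iterating the removal of $\al$-safe edges terminates at an $\al$-critical graph $G^{\ast}\in{\cal G}(n,\al)$ with $\F(G)<\F(G^{\ast})$ strictly (at least one edge was deleted). Hence, granting the bound for $\al$-critical graphs, any non-$\al$-critical $G$ satisfies $\F(G)<\fT(n,\al)$; since $\tur{n}{\al}$ is $\al$-critical (Example~\ref{exa_critical}), such a $G$ neither attains the bound nor is isomorphic to $\tur{n}{\al}$, exactly as predicted. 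It remains to treat $\al$-critical $G$. For such $G$, pick a vertex $v$ of maximum degree $\Delta$. A greedy argument (repeatedly take a vertex and delete its closed neighbourhood) gives $\al\ge n/(\Delta+1)$, hence $\Delta\ge\lceil n/\al\rceil-1$; and since $\al\le n-1$ forces an edge, $v$ is not isolated. By Lemma~\ref{lem_main1}, $\F(G)=\F(\gp{G}{v})+\F(\gpp{G}{v})$, while Lemma~\ref{lem_crit} gives $\gp{G}{v}\in{\cal G}(n-1,\al)$ and $\gpp{G}{v}\in{\cal G}(n-\Delta-1,\al-1)$. Applying the induction hypothesis to these strictly smaller graphs and then Corollary~\ref{cor_f1inc} with $n-\Delta-1\le n-\lceil n/\al\rceil$, I obtain
\[
\F(G)\le\fT(n-1,\al)+\fT(n-\Delta-1,\al-1)\le\fT(n-1,\al)+\fT\left(n-\left\lceil\tfrac{n}{\al}\right\rceil,\al-1\right)=\fT(n,\al),
\]
the last equality being the recursion of Lemma~\ref{lem_ind}.

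For the equality case, suppose $\F(G)=\fT(n,\al)$ with $G$ $\al$-critical. Chasing equality through both displayed inequalities and using the strict monotonicity of Corollary~\ref{cor_f1inc} forces $\Delta=\lceil n/\al\rceil-1$, together with $\gp{G}{v}\simeq\tur{n-1}{\al}$ (a disjoint union of balanced cliques $C_1,\dots,C_\al$) and $\gpp{G}{v}\simeq\tur{n-\lceil n/\al\rceil}{\al-1}$, via the equality clauses of the induction hypothesis. The core task is then to locate $N(v)$ inside $\gp{G}{v}$. Since $\gpp{G}{v}=\bigsqcup_i(C_i\setminus N(v))$ has exactly $\al-1$ cliques, precisely one clique, say $C_\al$, satisfies $C_\al\subseteq N(v)$. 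To show $N(v)$ contains nothing else, I would use $\al$-criticality: if some $w\in C_i\cap N(v)$ with $i\ne\al$ existed, then criticality of $vw$ yields a stable set $T$ of size $\al+1$ in $\gme{G}{vw}$ with $v,w\in T$; the set $T\setminus\{v\}$ is a maximum stable set of $\tur{n-1}{\al}$, so it meets $C_\al$ in some vertex $u\ne w$, yet $u\in C_\al\subseteq N(v)$ contradicts that $v$ is non-adjacent to all of $T\setminus\{v,w\}$. Hence $N(v)=C_\al$, and $G$ is the disjoint union of $C_1,\dots,C_{\al-1}$ and $C_\al\cup\{v\}$. Finally $|C_\al|=\Delta=\lceil n/\al\rceil-1=\lfloor(n-1)/\al\rfloor$ identifies $C_\al$ as a \emph{minimum} clique of $\tur{n-1}{\al}$, so $C_\al\cup\{v\}$ has the maximum admissible size $\lceil n/\al\rceil$; since $\tur{n}{\al}$ is precisely $\tur{n-1}{\al}$ with one vertex added to a smallest clique, $G\simeq\tur{n}{\al}$. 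The converse is trivial as $\F(\tur{n}{\al})=\fT(n,\al)$ by definition.

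I expect the genuine difficulty to lie entirely in this last paragraph: the inequality is an almost mechanical induction once the degree bound $\Delta\ge\lceil n/\al\rceil-1$ and the monotonicity of $\fT$ are available, whereas pinning down $N(v)$ as a single clique — and checking the resulting clique sizes are balanced — is the step where $\al$-criticality must be invoked essentially, and where the subtle counting with the $\al-1$ cliques of $\gpp{G}{v}$ takes place.
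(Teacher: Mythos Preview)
Your proof is correct and follows essentially the same inductive strategy as the paper: reduce to $\al$-critical graphs, decompose at a maximum-degree vertex $v$, and combine Lemma~\ref{lem_crit}, Corollary~\ref{cor_f1inc}, and the recursion of Lemma~\ref{lem_ind}. The only differences are that you obtain $\Delta\ge\lceil n/\al\rceil-1$ via an elementary greedy argument where the paper invokes Brooks' theorem together with $n\le\chi\cdot\al$, and you spell out in full the equality-case reconstruction (locating $N(v)$ as a single smallest clique of $\gp{G}{v}\simeq\tur{n-1}{\al}$ via $\al$-criticality) where the paper compresses this to a one-line remark.
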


\begin{proof}
The cases $\al = 1$ and $\al = n$ are straightforward. Indeed $G \simeq \tur{n}{1}$ when $\al = 1$, and $G \simeq \tur{n}{n}$ when $\al = n$. We can assume that $2 \le \al \le n-1$, and thus $n \geq 3$. We now prove by induction on $n$ that if $G$ is extremal, then it is isomorphic to $\tur{n}{\al}$. 

The graph $G$ is $\al$-critical. Otherwise, there exists an edge $e \in E(G)$ such that $\al(G) = \al(\gme{G}{e})$, and by Lemma \ref{lem_main1}, $F(G) < F(\gme{G}{e})$. This is a contradiction with $G$ being extremal.

Let us compute $F(G)$ thanks to Lemma~\ref{lem_main1}. Let $v \in V(G)$ of maximum degree $\Delta$. The vertex $v$ is not isolated since $\al < n$. Thus by Lemma~\ref{lem_crit}, $\al(\gp{G}{v})  =  \al$ and $\al(\gpp{G}{v}) = \al - 1$. On the other hand, If $\chi$ is the chromatic number of $G$, it is well-known that $n \le \chi \ . \ \al$ (see, e.g., Berge~\cite{Berge01}), and that $\chi \le \Delta + 1$ (see Brooks~\cite{Brooks41}). It follows that 
\begin{equation} \label{eq_ngpp} 
n(\gpp{G}{v}) = n - \Delta - 1 \leq n - \left\lceil \frac{n}{\al}\right\rceil.
\end{equation}
Note that $n(\gpp{G}{v}) \geq 1$ since $\al \geq 2$. 

We can apply the induction hypothesis on the graphs $\gp{G}{v}$ and $\gpp{G}{v}$. We obtain

\[
\begin{array}{ r c l p{0.35 \textwidth}}
\fT(n,\al) & \le & \F(G) &\textrm{as $G$ is extremal},\\
& = & \F(\gp{G}{v}) + \F(\gpp{G}{v}) & \textrm{by Lemma~\ref{lem_main1},}\\ 
& \le & \fT(n(\gp{G}{v}),\al(\gp{G}{v})) + \fT(n(\gpp{G}{v}),\al(\gpp{G}{v})) & \textrm{by induction,}\\
& = & \fT(n-1,\al) + \fT(n-\Delta-1,\al-1) &\\
& \le &  \fT(n-1,\al) +  \fT(n-\left\lceil \frac{n}{\al}\right\rceil,\al -1) & \textrm{by Eq.~\eqref{eq_ngpp} and Corollary~\ref{cor_f1inc},}\\
& = & \fT(n,\al) & \textrm{by Lemma~\ref{lem_ind}}.
\end{array}
\]
Hence  equality holds everywhere. In particular, by induction, the graphs $\gp{G}{v}$, $\gpp{G}{v}$ are extremal, and $\gp{G}{v} \simeq \tur{n-1}{\al}$, $\gpp{G}{v} \simeq \tur{n-\left\lceil \frac{n}{\al}\right\rceil}{\al -1}$. Coming back to $G$ from $\gp{G}{v}$ and $\gpp{G}{v}$ and recalling that $v$ has maximum degree, it follows that $G \simeq \tur{n}{\al}$.
\end{proof}

Corollary~\ref{cor_f1inc} states that $\fT(n,\al)$ is increasing in $n$. It was an easy consequence of Lemma~\ref{lem_ind}. The function $\fT(n,\al)$ is also increasing in $\al$. Theorem~\ref{thm_genUB} can be used to prove this fact easily as shown now.

\begin{cor} \label{cor_f1inc2}
The function $\fT(n,\al)$ is strictly increasing in $\al$ when $n$ is fixed.
\end{cor}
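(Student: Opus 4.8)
The plan is to prove that $\fT(n,\al)$ is strictly increasing in $\al$ by exhibiting, for each fixed $n$ and each $\al$ with $1 \le \al \le n-1$, a graph of order $n$ and stability number $\al+1$ whose Fibonacci index is at least $\fT(n,\al)$, and then invoking Theorem~\ref{thm_genUB} on the class ${\cal G}(n,\al+1)$. The cleanest way to do this is to take the Tur\'an graph $\tur{n}{\al}$ itself and delete a single edge $e$ chosen so that $\al(\gme{\tur{n}{\al}}{e}) = \al+1$. Since Tur\'an graphs are $\al$-critical (Example~\ref{exa_critical}), \emph{every} edge of $\tur{n}{\al}$ is $\al$-critical, so deleting any edge $e$ raises the stability number to exactly $\al+1$; meanwhile Lemma~\ref{lem_main1} gives $\fT(n,\al) = \F(\tur{n}{\al}) < \F(\gme{\tur{n}{\al}}{e})$.

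The graph $H := \gme{\tur{n}{\al}}{e}$ therefore lies in ${\cal G}(n,\al+1)$, so Theorem~\ref{thm_genUB} applied with stability number $\al+1$ yields $\F(H) \le \fT(n,\al+1)$. Chaining the two inequalities gives
\[
\fT(n,\al) < \F(H) \le \fT(n,\al+1),
\]
which is exactly the desired strict monotonicity in $\al$. I would state this as a short computation once the two facts — that deleting one edge of the $\al$-critical graph $\tur{n}{\al}$ produces a graph of stability number precisely $\al+1$, and that edge deletion strictly increases $\F$ — are in place.

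The one point requiring genuine care is confirming that $\al(H) = \al+1$ and not something larger. Deleting a single edge can raise the stability number by at most one (any stable set in $H$ that is not stable in $\tur{n}{\al}$ must use both endpoints of $e$, and replacing them by a single endpoint gives a stable set in $\tur{n}{\al}$ of size one less), so $\al(H) \le \al(\tur{n}{\al}) + 1 = \al+1$; and $\al$-criticality of the deleted edge forces $\al(H) \ge \al+1$. Hence $\al(H) = \al+1$ exactly, and $H$ genuinely belongs to ${\cal G}(n,\al+1)$.

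I expect no serious obstacle here: the only subtlety is making sure an edge of $\tur{n}{\al}$ actually exists to delete, which holds precisely when $\al < n$ (for $\al = n$ the graph $\overline{\K{n}}$ has no edges, but then $\al+1 > n$ falls outside the range and nothing is to be proved). Thus the argument covers all relevant $\al$ in $\{1,\dots,n-1\}$, and strict monotonicity in $\al$ follows. The whole proof is a two-line consequence of $\al$-criticality, Lemma~\ref{lem_main1}, and Theorem~\ref{thm_genUB}, mirroring the remark in the text that ``Theorem~\ref{thm_genUB} can be used to prove this fact easily.''
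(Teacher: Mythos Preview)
Your proof is correct and is essentially the same argument as the paper's: delete an edge $e$ from the $\al$-critical graph $\tur{n}{\al}$, observe that $\al(\gme{\tur{n}{\al}}{e}) = \al+1$, and then chain $\fT(n,\al) < \F(\gme{\tur{n}{\al}}{e}) \le \fT(n,\al+1)$ using Lemma~\ref{lem_main1} and Theorem~\ref{thm_genUB}. The paper handles the endpoints $\al=1$ and $\al=n-1$ separately via Lemma~\ref{lem_main2}, whereas you treat all $1 \le \al \le n-1$ uniformly, and you are a bit more explicit about why the stability number rises by exactly one; otherwise the two proofs coincide.
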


\begin{proof}
Suppose $2 \le \al \le n-1$. By Lemma~\ref{lem_main2} it is clear that $\fT(n,1) < \fT(n,\al) < \fT(n,n)$. Now, let $e$ be an edge of $\tur{n}{\al}$. Clearly $\al(\gme{\tur{n}{\al}}{e}) = \al+1$. Moreover, by Lemma~\ref{lem_main1} and Theorem~\ref{thm_genUB},
\[
\F(\tur{n}{\al}) < \F(\gme{\tur{n}{\al}}{e}) < \F(\tur{n}{\al+1}).
\]
Therefore, $\fT(n,\al) < \fT(n,\al+1)$.
\end{proof}

\section{Connected graphs}  \label{sec_conn}

We now consider graphs with maximum Fibonacci index inside the class ${\cal C}(n,\al)$. Such graphs are called \emph{extremal}. If $G$ is connected, the bound of Theorem \ref{thm_genUB} is clearly not tight, except when $\al = 1$, that is, when $G$ is a complete graph. We are going to prove that there is one extremal graph up to isomorphism, the Tur\'an-connected graph $\turc{n}{\al}$, with the exception of the cycle $\C{5}$ (see Theorem~\ref{thm_conUB}).  First, we need preliminary results and definitions to prove this theorem.

We denote by $\fTC(n,\al)$ the Fibonacci index of the Tur\'an-connected graph $\turc{n}{\al}$. An inductive formula for its value is given in the next lemma.

\begin{lem} \label{lem_f2}
Let $n$ and $\al$ be integers such that $1 \le \al \le n-1$. Then
\[
\fTC(n,\al) = \left\{ 
\begin{array}{l l l}
n+1 & \textrm{if} & \al = 1,\\
2^{n-1} + 1 & \textrm{if} & \al = n-1,\\
\fT(n-1,\al) + \fT(n',\al') & \textrm{if} & 2 \le \al \le n-2,\\
\end{array}
\right.
\]
where $n' = n-\left\lceil\frac{n}{\al}\right\rceil-\al+1$ and $\al' = \min(n',\al - 1)$.
\end{lem}
\begin{proof}
The cases $\al = 1$ and $\al = n-1$ are trivial by Lemma~\ref{lem_main2}. Suppose now that $2 \le \al \le n-2$. Let $v$ be a vertex of maximum degree in $\turc{n}{\al}$. We apply Lemma~\ref{lem_main1} to compute $F(\turc{n}{\al})$. Observe that the graphs $\gp{\turc{n}{\al}}{v}$ and $\gpp{\turc{n}{\al}}{v}$ are both Tur\'an graphs when $2 \le \al \le n-2$. 

The graph $\gp{\turc{n}{\al}}{v}$ is isomorphic to $\tur{n-1}{\al}$. Let us show that  $\gpp{\turc{n}{\al}}{v}$ is isomorphic to $\tur{n'}{\al'}$. By definition of a Tur\'an-connected graph, $d(v)$ is equal to $\left\lceil\frac{n}{\al}\right\rceil+\al-2$. Thus 
\[
n(\gpp{\turc{n}{\al}}{v}) = n - d(v) - 1 = n'.
\]
If $\al < \frac{n}{2}$, then $\turc{n}{\al}$ has a clique of order at least 3 and $\al(\gpp{\turc{n}{\al}}{v}) = \al - 1 \leq n'$. Otherwise, $\gpp{\turc{n}{\al}}{v} \simeq \overline{\K{n'}}$ and $\al(\gpp{\turc{n}{\al}}{v}) = n' \leq \al - 1$. Therefore $\al(\gpp{\turc{n}{\al}}{v}) =  \min(n',\al - 1)$ in both cases.

By Lemma~\ref{lem_main1}, these observations leads to 
\[
\fTC(n,\al) =  \fT(n-1,\al) + \fT(n',\al'). \qedhere
\]
\end{proof}

\begin{defn}
A \emph{bridge} in a connected graph $G$ is an edge $e \in E(G)$ such that the graph $\gme{G}{e}$ is no more connected. To a bridge $e = v_1 v_2$ of $G$ which is $\al$-safe, we associate a \emph{decomposition} $\Dec{G_1}{v_1}{G_2}{v_2}$ such that $v_1 \in V(G_1)$, $v_2 \in V(G_2)$, and $G_1, G_2$ are the two connected components of $\gme{G}{e}$. A decomposition is said to be $\al$-\emph{critical} if $G_1$ is $\al$-critical.
\end{defn}

\begin{lem} \label{lem_compCrit}
Let $G$ be a connected graph. If $G$ is extremal, then either $G$ is $\al$-critical or $G$ has an $\al$-critical decomposition.
\end{lem}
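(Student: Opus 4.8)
The plan is to dispose of the dichotomy by assuming $G$ is extremal in ${\cal C}(n,\al)$ but \emph{not} $\al$-critical, and then producing an $\al$-critical decomposition. The first step I would carry out is to show that \emph{every $\al$-safe edge of $G$ is a bridge}. Indeed, if an $\al$-safe edge $e$ were not a bridge, then $\gme{G}{e}$ would still be connected, of the same order $n$, and of the same stability number $\al$ (since $e$ is $\al$-safe); hence $\gme{G}{e} \in {\cal C}(n,\al)$, while Lemma~\ref{lem_main1} gives $\F(G) < \F(\gme{G}{e})$, contradicting the extremality of $G$. Consequently, since $G$ is not $\al$-critical it possesses an $\al$-safe edge, and that edge is an $\al$-safe bridge; in particular the family of $\al$-safe bridges of $G$ is nonempty.

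Next I would single out the decomposition explicitly. Consider the finite, nonempty family of all components arising from $\al$-safe bridges: for each $\al$-safe bridge $e = v_1 v_2$ with components $G_1 \ni v_1$ and $G_2 \ni v_2$ of $\gme{G}{e}$, place both $G_1$ and $G_2$ in the family. Choose a member of minimum order and relabel so that it is $G_1$, attached through the $\al$-safe bridge $e = v_1 v_2$ with $v_1 \in V(G_1)$. The claim is that this $G_1$ is $\al$-critical, so that $\Dec{G_1}{v_1}{G_2}{v_2}$ is the desired $\al$-critical decomposition.

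The heart of the argument, and the step I expect to be the main obstacle, is the claim itself, whose proof requires transferring ``$\al$-safe inside $G_1$'' to ``$\al$-safe inside $G$''. Suppose $G_1$ is not $\al$-critical, so it has an edge $e'$ with $\al(\gme{(G_1)}{e'}) = \al(G_1)$. Because $e = v_1 v_2$ is an $\al$-safe bridge, $\gme{G}{e}$ is the disjoint union of $G_1$ and $G_2$, whence $\al(G) = \al(G_1) + \al(G_2)$. Deleting an edge never decreases the stability number, so $\al(G) \le \al(\gme{G}{e'})$; conversely, deleting also $e$ from $\gme{G}{e'}$ produces the disjoint union of $\gme{(G_1)}{e'}$ and $G_2$, so $\al(\gme{G}{e'}) \le \al(\gme{(G_1)}{e'}) + \al(G_2) = \al(G_1) + \al(G_2) = \al(G)$. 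The two inequalities force $\al(\gme{G}{e'}) = \al(G)$, i.e.\ $e'$ is $\al$-safe in $G$, and hence a bridge of $G$ by the first step.

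Finally I would cash in the minimality to reach a contradiction. Since $e'$ lies in $G_1$ and $e$ is the only edge of $G$ joining $G_1$ to $G_2$, deleting the bridge $e'$ splits $G_1$ into two nonempty parts, the one containing $v_1$ remaining tied to $G_2$ through $e$; thus the component of $\gme{G}{e'}$ not containing $v_1$ is a nonempty proper subgraph of $G_1$, of order strictly smaller than that of $G_1$. This component belongs to the family above (it is a component of the $\al$-safe bridge $e'$), contradicting the minimal choice of $G_1$. Hence $G_1$ is $\al$-critical. The degenerate case where the chosen member has a single vertex causes no trouble, since a graph with no edge is $\al$-critical by convention. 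This establishes that if $G$ is not $\al$-critical it admits an $\al$-critical decomposition, proving the lemma.
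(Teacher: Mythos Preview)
Your proof is correct and follows essentially the same approach as the paper's: show that every $\al$-safe edge of an extremal $G$ is a bridge, then pick an $\al$-safe bridge whose associated component $G_1$ has minimum order and derive a contradiction from any $\al$-safe edge inside $G_1$. In fact you are more careful than the paper at one point, explicitly verifying (via $\al(G)=\al(G_1)+\al(G_2)$ and the inequality $\al(\gme{G}{e'}) \le \al(\gme{(G_1)}{e'})+\al(G_2)$) that an edge $\al$-safe in $G_1$ is also $\al$-safe in $G$, a step the paper leaves implicit.
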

\begin{proof}
We suppose that $G$ is not $\al$-critical and we show that it must contain an $\al$-critical decomposition. 

Let $e$ be an $\al$-safe edge of $G$. Then $e$ must be a bridge. Otherwise, the graph $\gme{G}{e}$ is connected, has the same order and stability number as $G$ and satisfies $\F(\gme{G}{e}) > \F(G)$ by Lemma~\ref{lem_main1}. This is a contradiction with $G$ being extremal. Therefore $G$ contains at least one $\al$-safe bridge defining a decomposition of $G$. 

Let us choose a decomposition $\Dec{G_1}{v_1}{G_2}{v_2}$ such that $G_1$ is of minimum order. Then, $G_1$ is $\al$-critical. Otherwise, $G_1$ contains an $\al$-safe bridge $e' = w_1 w_2$, since the edges of $G$ are $\al$-critical or $\al$-safe bridges by the first part of the proof. Let $\Dec{H_1}{w_1}{H_2}{w_2}$ be the decomposition of $G$ defined by $e'$, such that $v_1 \in V(H_2)$. Then $n(H_1) < n(G_1)$, which is a contradiction. Hence the decomposition $\Dec{G_1}{v_1}{G_2}{v_2}$ is $\al$-critical.
\end{proof}

\begin{thm} \label{thm_conUB}
Let $G$ be a connected graph of order $n$ with a stability number $\al$, then
\[
\F(G) \le \fTC(n,\al),
\]
with equality if and only if $G \simeq \turc{n}{\al}$ when $(n,\al) \neq (5,2)$, and $G \simeq \turc{5}{2}$ or $G \simeq \C{5}$ when $(n,\al) = (5,2)$.
\end{thm}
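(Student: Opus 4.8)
The plan is to follow the same inductive strategy as in Theorem~\ref{thm_genUB}, but now accounting for the connectivity constraint via the decomposition machinery of Lemma~\ref{lem_compCrit}. By that lemma, an extremal connected graph $G$ is either $\al$-critical or admits an $\al$-critical decomposition $\Dec{G_1}{v_1}{G_2}{v_2}$, and I would treat these two cases separately after disposing of the boundary values $\al = 1$ (where $G \simeq \K{n} \simeq \turc{n}{1}$) and $\al = n-1$ (where $G \simeq \Star{n} \simeq \turc{n}{n-1}$ by Lemma~\ref{lem_main2}). So the core of the argument concerns $2 \le \al \le n-2$, proceeding by induction on $n$.

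First I would handle the case where $G$ itself is $\al$-critical. Here I would pick a vertex $v$ of maximum degree $\Delta$ and expand $\F(G) = \F(\gp{G}{v}) + \F(\gpp{G}{v})$ by Lemma~\ref{lem_main1}. By Lemma~\ref{lem_crit1} the graph $\gp{G}{v}$ is connected, and by Lemma~\ref{lem_crit} its stability number is still $\al$ while $\al(\gpp{G}{v}) = \al-1$; the order bounds come from Brooks' theorem and $n \le \chi\,\al$ exactly as in Eq.~\eqref{eq_ngpp}. I would then bound $\F(\gp{G}{v})$ using the connected induction hypothesis (i.e.\ $\fTC$) and $\F(\gpp{G}{v})$ using the general bound $\fT$ from Theorem~\ref{thm_genUB}, and compare the resulting sum against $\fTC(n,\al)$ via the inductive formula of Lemma~\ref{lem_f2}. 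The delicate point is that this $\al$-critical case must actually be \emph{subordinate}: I expect the genuinely extremal graph $\turc{n}{\al}$ to arise in the decomposition case, and the $\al$-critical case should yield a value strictly below $\fTC(n,\al)$ for $2 \le \al \le n-2$ — except precisely at $(n,\al)=(5,2)$, where the $\al$-critical graph $\C{5}$ ties the bound. Pinning down this strict inequality, and isolating the single exceptional tie, is where I expect the bulk of the careful computation to live.

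In the decomposition case, I would write $n_i = n(G_i)$ and $\al_i = \al(G_i)$, so that $n_1 + n_2 = n$ and, since the bridge is $\al$-safe, $\al_1 + \al_2 = \al$. Using Lemma~\ref{lem_main1} on the bridge $e = v_1v_2$, I would express $\F(G)$ in terms of the Fibonacci indices of $G_1$, $G_2$ and their vertex/closed-neighborhood deletions at $v_1, v_2$; since $G_1$ is $\al$-critical I can apply Lemmas~\ref{lem_crit1}--\ref{lem_crit} to control $\gp{G_1}{v_1}$ and $\gpp{G_1}{v_1}$, while $G_2$ is connected and eligible for the $\fTC$ induction hypothesis. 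The goal is to show that every such decomposition yields at most $\fTC(n,\al)$, with the maximum forcing $G_1$ to be a single balanced clique attached to a Tur\'an-connected remainder — which reassembles into $\turc{n}{\al}$.

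The hard part will be twofold: controlling the case analysis so that the two regimes ($\al$-critical versus decomposable) are compared against the \emph{same} target $\fTC(n,\al)$, and verifying that all the intermediate inequalities are strict except in the genuine tie at $(5,2)$. I anticipate needing the strict monotonicity of $\fT$ in both arguments (Corollaries~\ref{cor_f1inc} and \ref{cor_f1inc2}) to squeeze the sub-optimal configurations strictly below the bound, and I would reserve a direct, finite check for small $n$ (in particular around $n=5$) to certify both the exception and the base of the induction. The equality characterization then follows by tracking which inequalities are tight: equality everywhere forces each deleted subgraph to be extremal of its own type, and rebuilding $G$ from these extremal pieces — together with the maximum-degree choice of $v$ — recovers $\turc{n}{\al}$ (or $\C{5}$ in the exceptional case).
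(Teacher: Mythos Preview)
Your overall architecture --- induction on $n$, base cases handled by finite check, and the dichotomy from Lemma~\ref{lem_compCrit} into the $\al$-critical case versus the $\al$-critical-decomposition case --- matches the paper exactly. The decomposition case is also broadly right, though the paper extracts more than you indicate: from the formula $\F(G) = \F(G_1)\F(\gp{G_2}{v_2}) + \F(\gp{G_1}{v_1})\F(\gpp{G_2}{v_2})$ and extremality it forces not only that each $G_i$ is extremal (hence Tur\'an-connected by induction) but also that $\gp{G_i}{v_i} \simeq \tur{n_i-1}{\al_i}$, which pins $v_i$ down as a \emph{maximum-degree} vertex of $G_i$. That extra rigidity is what drives the balancing argument in the paper's step~($iii$), and you will need it.

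The genuine gap is in your plan for the $\al$-critical case. You propose to bound $\F(G) \le \fTC(n-1,\al) + \fT(n-\Delta-1,\al-1)$ and then compare this sum against the formula of Lemma~\ref{lem_f2}, expecting a \emph{strict} inequality below $\fTC(n,\al)$ for all $2 \le \al \le n-2$ except $(5,2)$. This expectation is false: for $(n,\al)=(7,3)$ one has $\fTC(6,3)+\fT(4,2)=22+9=31=\fTC(7,3)$, and the same equality recurs for every odd cycle parameter $(2k+1,k)$. So the numerical ceiling you compute can coincide with the target, and your comparison via Lemma~\ref{lem_f2} cannot by itself rule out non-complete $\al$-critical graphs. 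The paper does not attempt a numerical strict inequality here; instead it argues that if $G$ is extremal then equality in the bound forces $\gp{G}{v} \simeq \turc{n-1}{\al}$ and $\gpp{G}{v}$ to be Tur\'an, and then reconstructs $G$ from $\turc{n-1}{\al}$ plus the vertex $v$: a short degree count shows $v$ must be adjacent to all of some clique $H$ of $\gp{G}{v}$ and to at least one further vertex $w \notin H$, whence the edge $vw$ is $\al$-safe --- contradicting $\al$-criticality. That structural contradiction, not a numerical comparison to Lemma~\ref{lem_f2}, is what closes the case, and it is the piece your plan is missing.
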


\begin{proof}
 We prove by induction on $n$ that if $G$ is extremal, then it is isomorphic to $\turc{n}{\al}$ or $\C{5}$. To handle more easily the general case of the induction (in a way to avoid the extremal graph $\C{5}$), we consider all connected graphs with up to 6 vertices as the basis of the induction. For these basic cases, we refer to the report of an exhaustive automated verification~\cite{GPHRepFiboAlpha}. We thus suppose that $n \ge 7$.

We know by Lemma~\ref{lem_compCrit} that either $G$ has an $\al$-critical decomposition or $G$ is $\al$-critical. We consider now these two situations.

\paragraph{1) $G$ has an $\al$-critical decomposition.} We prove in three steps that $G \simeq \turc{n}{\al}$: ($i$) We establish that for every decomposition $\Dec{G_1}{v_1}{G_2}{v_2}$, the graph $G_i$ is extremal and is isomorphic to a Tur\'an-connected graph such that $d(v_i) = \Delta(G_i)$, for $i = 1,2$. ($ii$) We show that if such a decomposition is $\al$-critical, then $G_1$ is a clique. ($iii$) We prove that $G$ is itself isomorphic to a Tur\'an-connected graph.

($i$) For the first step, let $\Dec{G_1}{v_1}{G_2}{v_2}$ be a decomposition of $G$, $n_1$ be the order of $G_1$, and $\al_1$ its stability number. We prove that $G_1 \simeq \turc{n_1}{\al_1}$ such that $d(v_1) = \Delta(G_1)$. The argument is identical for $G_2$. By Lemma~\ref{lem_main1}, we have
\[
 \F(G) = \F(G_1) \F(\gp{G_2}{v_2}) + \F(\gp{G_1}{v_1}) \F(\gpp{G_2}{v_2}).
\]
By the induction hypothesis, $\F(G_1) \leq \fTC(n_1,\al_1)$. The graph $\gp{G_1}{v_1}$ has an order $n_1 - 1$ and a stability number $\leq \al_1$. Hence by Theorem~\ref{thm_genUB} and Corollary~\ref{cor_f1inc2}, $ \F(\gp{G_1}{v_1}) \leq \fT(n_1-1,\al_1)$. It follows that
\begin{equation} \label{eq_con}
 \F(G)  \le \fTC(n_1,\al_1) \F(\gp{G_2}{v_2}) + \fT(n_1 - 1,\al_1) \F(\gpp{G_2}{v_2}).
\end{equation}
As $G$ is supposed to be extremal, equality occurs. It means that $\gp{G_1}{v_1} \simeq \tur{n_1 - 1}{\al_1}$ and $G_1$ is extremal. If $G_1$ is isomorphic to $\C{5}$, then $n_1 = 5$, $\al_1=2$ and $\F(G_1) = \fTC(5,2)$.  However, $\F(\gp{G_1}{v_1}) =  \F(\Path{4}) < \fT(4,2)$. By~\eqref{eq_con}, this leads to a contradiction with $G$ being extremal. Thus, $G_1$ must be isomorphic to $\turc{n_1}{\al_1}$. Moreover, $v_1$ is a vertex of maximum degree of $G_1$. Otherwise, $\gp{G_1}{v_1}$ cannot be isomorphic to the graph $\tur{n_1 - 1}{\al_1}$.

($ii$) The second step is easy. Let $\Dec{G_1}{v_1}{G_2}{v_2}$ be an $\al$-critical decomposition of $G$, that is, $G_1$ is $\al$-critical. By ($i$), $G_1$ is isomorphic to a Tur\'an-connected graph. The complete graph is the only Tur\'an-connected graph which is $\al$-critical. Therefore, $G_1$ is a clique. 

($iii$) We now suppose that $G$ has an $\al$-critical decomposition $\Dec{G_1}{v_1}{G_2}{v_2}$ and we show that $G \simeq \turc{n}{\al}$. Let $n_1$ be the order of $G_1$ and $\al_1$ its stability number. As $v_1 v_2$ is an $\al$-safe bridge, it is clear that $n(G_2) = n - n_1$ and $\al(G_2) = \al - \al_1$. By ($i$) and ($ii$), $G_1$ is a clique (and thus $\al_1 = 1$), $G_2 \simeq\turc{n - n_1}{\al - 1}$, and $v_2$ is a vertex of maximum degree in $G_2$.  

If $\al = 2$, then $G_2$ is also a clique in $G$. By Lemma~\ref{lem_main1} and the fact that $\F(\K{n}) = n+1$ we have,
\[
\begin{array}{r c l}
\F(G) & = & \F(\gp{G}{v_1}) + \F(\gpp{G}{v_1})\\
 & = & n_1 (n - n_1 + 1) + (n-n_1) =  n + n \ n_1 - n_1^2.
\end{array}
\]
When $n$ is fixed, this function is maximized when $n_1 = \frac{n}{2}$. That is, when $G_1$ and $G_2$ are balanced cliques. This appears if and only if $G \simeq \turc{n}{2}$. 

Thus we suppose that $\al \geq 3$. In other words, $G$ contains at least three cliques: the clique $G_1$ of order $n_1$; the clique $H$ containing $v_2$ and a clique $H'$ in $G_2$ linked to $H$ by an $\al$-safe bridge $v_2 v_3$. Let $k =  \frac{n-n_1}{\al-1}$, then the order of $H$ is $\left\lceil k \right\rceil$ and the order of $H'$ is $\left\lceil k \right\rceil$ or $\left\lfloor k \right\rfloor$ (recall that $G_2 \simeq \turc{n-n_1}{\al -1}$). These cliques are represented in Figure~\ref{fig_case1}. 

\begin{figure}[!ht]
\begin{center}
\includegraphics{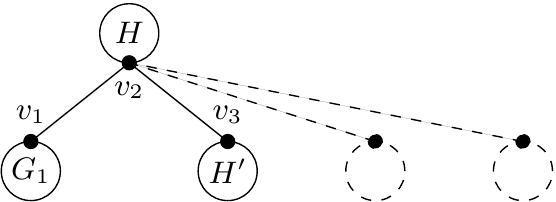}
\caption{Cliques in the graph $G$} \label{fig_case1}
\end{center}
\end{figure}
To prove that $G$ is isomorphic to a Tur\'an-connected graph, it remains to show that the clique $G_1$ is balanced with the cliques $H$ and $H'$. We consider the decomposition defined by the $\al$-safe bridge $v_2 v_3$. By~($i$), $G_1$ and $H$ are cliques of a Tur\'an-connected graph, and $H$ is a clique with maximum order in this graph (recall that $v_2$ is a vertex of maximum degree in $G_2$). Therefore $\left\lceil k \right\rceil -1 \le n_1 \le  \left\lceil k \right\rceil$, showing that $G_1$ is balanced with $H$ and $H'$.

\paragraph{2) $G$ is $\al$-critical.}  Under this hypothesis, we prove that $G$ is a complete graph, and thus is isomorphic to a Tur\'an-connected graph.

Suppose that $G$ is not complete. Let $v$ be a vertex of $G$ with a maximum degree $d(v) = \Delta$. As $G$ is connected and $\al$-critical, the graph $\gp{G}{v}$ is connected by Lemma~\ref{lem_crit1}. By Lemma~\ref{lem_crit}, $\al(\gp{G}{v}) = \al$ and $\al(\gpp{G}{v}) = \al - 1$. Moreover, $n(\gp{G}{v}) = n-1$ and $n(\gpp{G}{v}) = n - \Delta - 1$. By the induction hypothesis and Theorem~\ref{thm_genUB}, we get
\[
 \F(G) = \F(\gp{G}{v}) + \F(\gpp{G}{v}) \le \fTC(n-1,\al) + \fT(n-\Delta-1,\al-1).
\] 
Therefore, $G$ is extremal if and only if $\gpp{G}{v} \simeq \tur{n-\Delta-1}{\al-1}$ and $\gp{G}{v}$ is extremal. However, $\gp{G}{v}$ is not isomorphic to $\C{5}$ as $n \ge 7$. Thus $\gp{G}{v} \simeq \turc{n-1}{\al}$.

So, the graph $G$ is composed by the graph $\gp{G}{v} \simeq \turc{n-1}{\al}$ and an additional vertex $v$ connected to $\turc{n-1}{\al}$ by $\Delta$ edges. 

There must be an edge between $v$ and a vertex $v'$ of maximum degree in $\gp{G}{v}$, otherwise $\gpp{G}{v}$ is not isomorphic to a Tur\'an graph. The vertex $v'$ is adjacent to $\left\lceil \frac{n-1}{\al} \right\rceil + \al - 2$ vertices in $\gp{G}{v}$ and it is adjacent to $v$, that is,
\[
d(v') = \left\lceil \frac{n-1}{\al} \right\rceil + \al - 1.
\]
It follows that 
\begin{equation} \label{eq_del}
\Delta \ge d(v') >  \left\lceil \frac{n-1}{\al} \right\rceil
\end{equation}
as $G$ is not a complete graph. 

On the other hand, $v$ is adjacent to each vertex of some clique $H$ of $\gp{G}{v}$ since $\gpp{G}{v}$ has a stability number $\al - 1$. As this clique has order at most $\left\lceil \frac{n-1}{\al} \right\rceil$, $v$ must be adjacent to a vertex $w \notin H$ by \eqref{eq_del}. 

We observe that the edge $vw$ is $\al$-safe. This is impossible as $G$ is $\al$-critical. It follows that $G$ is a complete graph and the proof is completed. \qedhere
\end{proof}

The study of the maximum Fibonacci index inside the class ${\cal T}(n,\al)$ of trees with order $n$ and stability number $\al$ is strongly related to the study done in this section for the class ${\cal C}(n,\al)$. Indeed, due to the fact that trees are bipartite, a tree in ${\cal T}(n,\al)$ has always a stability number $\al \geq \frac{n}{2}$. Moreover, the Tur\'an-connected graph $\turc{n}{\al}$ is a tree when $\al \ge \frac{n}{2}$. Therefore, the upper bound on the Fibonacci index for connected graphs is also valid for trees. We thus get the next corollary with in addition the exact value of $\fTC(n,\al)$.

\begin{cor} \label{cor_trees} Let $G$ be a tree of order $n$ with a stability number $\al$, then
\[
\F(G) \le 3^{n-\al-1} 2^{2 \al - n + 1}+2^{n-\al-1},
\]
with equality if and only if $G \simeq \turc{n}{\al}$.
\end{cor}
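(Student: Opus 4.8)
The plan is to obtain both the inequality and the equality case for free from Theorem~\ref{thm_conUB}, reducing everything to a single explicit evaluation of $\fTC(n,\al)$. Since a tree is connected, Theorem~\ref{thm_conUB} already yields $\F(G) \le \fTC(n,\al)$ with the extremal graph characterized as $\turc{n}{\al}$, provided the exceptional pair $(n,\al) = (5,2)$ can be dismissed. This is immediate: every tree is bipartite, so one of its two color classes is a stable set of size at least $n/2$, giving $\al \ge n/2$; for $n = 5$ this forces $\al \ge 3$, so the pair $(5,2)$ simply does not occur among trees. Moreover $\turc{n}{\al}$ is itself a tree in the range $\al \ge n/2$: it is connected, and it is built from $\tur{n}{\al}$ (which has $n-\al$ edges once all its cliques have order at most $2$) by adding $\al-1$ edges, hence has exactly $n-1$ edges. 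It therefore remains only to show that $\fTC(n,\al)$ equals the stated closed form.

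To compute $\fTC(n,\al)$, I would first make the structure of $\turc{n}{\al}$ fully explicit when $\al \ge n/2$. In this range $n/\al \le 2$, so every clique of $\tur{n}{\al}$ has order $1$ or $2$; writing $a$ for the number of $\K{2}$'s and $b$ for the number of isolated vertices, the equations $a+b=\al$ and $2a+b=n$ give $a = n-\al$ and $b = 2\al-n$. Choosing the center $v$ inside one $\K{2}$, the $\al-1$ new edges join $v$ to a single vertex of each remaining clique. The resulting tree thus has $v$ adjacent to its original partner (a leaf), to one endpoint $a_i$ of each of the $n-\al-1$ other edges (with the opposite endpoint $b_i$ a leaf hanging off $a_i$), and to each of the $2\al-n$ former isolated vertices (all leaves).

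Finally I would evaluate $\fTC(n,\al) = \F(\gp{\turc{n}{\al}}{v}) + \F(\gpp{\turc{n}{\al}}{v})$ via Lemma~\ref{lem_main1}. Deleting $v$ leaves the disjoint union of $2\al-n+1$ isolated vertices and $n-\al-1$ copies of $\K{2}$; since $\F(\overline{\K{1}}) = 2$ and $\F(\K{2}) = 3$ by Example~\ref{exa_F}, multiplicativity gives $\F(\gp{\turc{n}{\al}}{v}) = 2^{2\al-n+1}\,3^{n-\al-1}$. Deleting the closed neighborhood of $v$ removes $v$, its partner, all the $a_i$, and all the former isolated vertices, leaving exactly the $n-\al-1$ leaves $b_i$ as isolated vertices, so $\F(\gpp{\turc{n}{\al}}{v}) = 2^{n-\al-1}$. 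Summing these yields $3^{n-\al-1}2^{2\al-n+1} + 2^{n-\al-1}$, as claimed. The only genuine care required is the bookkeeping of the clique sizes together with the (vacuous) check of the exceptional case; the remaining arithmetic is routine, and one could alternatively derive the same value from the recurrence of Lemma~\ref{lem_f2}, though the direct decomposition at $v$ is cleaner.
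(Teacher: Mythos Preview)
Your proposal is correct and follows essentially the same route as the paper: you invoke Theorem~\ref{thm_conUB}, rule out the exceptional pair $(5,2)$ via bipartiteness, note that $\turc{n}{\al}$ is a tree when $\al \ge n/2$, and then compute $\fTC(n,\al)$ by decomposing at the central vertex $v$ with Lemma~\ref{lem_main1}. The paper presents the same computation, phrasing the structure as $x=n-\al-1$ pending paths of length~2 and $y=2\al-n+1$ pending paths of length~1 attached to $v$, which yields exactly your $3^x 2^y + 2^x$.
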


\begin{proof}
It remains to compute the exact value of $\fTC(n,\al)$. When $\al \ge \frac{n}{2}$, the graph $\turc{n}{\al}$ is composed by one central vertex $v$ of degree $\al$ and $\al$ pending paths of length 1 or 2 attached to $v$. An extremity of a pending path of length 2 is a vertex $w$ such that $w \notin  \mathcal{N}(v)$. Thus there are $x = n-\al-1$ pending paths of length 2 since $\mathcal{N}(v)$ has size $\al +1$, and there are $y = \al - x = 2 \al - n + 1$ pending paths of length 1. We apply Lemma~\ref{lem_main1} on $v$ to get
\[
\fTC(n,\al) = \F(\K{2})^x \F(\K{1})^y + \F(\K{1})^x = 3^x 2^y+2^x. \qedhere
\] 
\end{proof}

We conclude this section by showing that the function $\fTC(n,\al)$ is strictly increasing in $n$ and $\al$, as already stated for the function  $\fT(n,\al)$  (see Corollaries \ref{cor_f1inc} and \ref{cor_f1inc2}).

\begin{prop} \label{cor_f2inc}
The function $\fTC(n,\al)$ is strictly increasing in $n$ and $\al$.
\end{prop}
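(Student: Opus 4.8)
The plan is to prove monotonicity in each variable separately, treating the two claims as largely independent and reducing each to the already-established monotonicity of $\fT$ (Corollaries~\ref{cor_f1inc} and \ref{cor_f1inc2}) via the inductive formula of Lemma~\ref{lem_f2}. For the boundary ranges $\al = 1$ and $\al \in \{n-1, n-2\}$ I would simply read off the closed forms $n+1$ and $2^{n-1}+1$ from Lemma~\ref{lem_f2} and compare directly, so the real content lies in the generic range $2 \le \al \le n-2$ where $\fTC(n,\al) = \fT(n-1,\al) + \fT(n',\al')$ with $n' = n-\lceil n/\al\rceil - \al + 1$ and $\al' = \min(n', \al-1)$.

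For monotonicity in $n$ with $\al$ fixed, I would compare $\fTC(n,\al)$ with $\fTC(n+1,\al)$. The first summand is $\fT(n-1,\al)$ versus $\fT(n,\al)$, which is strictly increasing by Corollary~\ref{cor_f1inc}. The second summand requires tracking how $n'$ and $\al'$ change as $n$ increases by one: since $\lceil n/\al\rceil$ is nondecreasing in $n$ and increases by at most one, $n'$ either stays fixed or increases by one, so $\fT(n',\al')$ is nondecreasing by Corollary~\ref{cor_f1inc} (with care when $\al'$ switches between $n'$ and $\al-1$). Combined with the strict increase of the first term, this would give strict monotonicity in $n$. An alternative, cleaner route I would prefer is the edge-deletion argument used in Corollary~\ref{cor_f1inc2}: removing a suitable pendant-type edge from $\turc{n}{\al}$ (or better, comparing $\turc{n}{\al}$ to a connected graph obtained by attaching a vertex) and invoking Lemma~\ref{lem_main1} together with Theorem~\ref{thm_conUB}, so as to avoid the floor/ceiling bookkeeping entirely.

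Indeed, the most elegant approach mirrors the proof of Corollary~\ref{cor_f1inc2} directly. To show $\fTC(n,\al) < \fTC(n,\al+1)$ for fixed $n$, I would pick an edge $e$ of $\turc{n}{\al}$ whose deletion keeps the graph connected and raises the stability number by exactly one (an $\al$-critical edge inside one of the cliques), so that $\gme{\turc{n}{\al}}{e}$ is a connected graph of order $n$ and stability number $\al+1$. Then Lemma~\ref{lem_main1} gives $\fTC(n,\al) = \F(\turc{n}{\al}) < \F(\gme{\turc{n}{\al}}{e}) \le \fTC(n,\al+1)$ by Theorem~\ref{thm_conUB}, yielding strict monotonicity in $\al$. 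For monotonicity in $n$ I would similarly produce from $\turc{n}{\al}$ a connected graph of order $n+1$ and stability number $\al$ whose Fibonacci index it bounds from below, for instance by subdividing or duplicating within a largest clique.

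The main obstacle I anticipate is the existence of a legitimate edge or vertex operation for the $n$-monotonicity step: deleting an edge changes the order, not just increasing it, so the clean edge-deletion trick does not transplant verbatim, and I must instead exhibit an explicit connected graph on $n+1$ vertices with stability number $\al$ and Fibonacci index strictly exceeding $\fTC(n,\al)$, then bound it above by $\fTC(n+1,\al)$ via Theorem~\ref{thm_conUB}. The cleanest candidate is to take $\turc{n}{\al}$ and add a new vertex into a clique of smallest order (preserving $\al$ and connectivity); verifying that this strictly increases the Fibonacci index follows from the vertex-splitting form of Lemma~\ref{lem_main1}. If that construction proves awkward at the boundary values of $\al$, I would fall back on the direct $\fT$-based computation from Lemma~\ref{lem_f2} for those cases, which is routine but tedious. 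I expect the $\al$-monotonicity to be immediate and the $n$-monotonicity to absorb essentially all the effort.
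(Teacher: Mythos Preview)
Your approach to $n$-monotonicity via Lemma~\ref{lem_f2} and the monotonicity of $\fT$ (Corollaries~\ref{cor_f1inc} and~\ref{cor_f1inc2}) is exactly what the paper does and is correct. Your assessment of the relative difficulty, however, is inverted: the $\al$-monotonicity is \emph{not} immediate, and it is where the paper expends the real effort.

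The gap is in your edge-deletion argument for $\al$-monotonicity. You propose to remove an $\al$-critical edge inside a clique of $\turc{n}{\al}$ while keeping the graph connected. This works only when some clique has order at least~$3$, i.e., when $\al < \frac{n}{2}$. When $\al \ge \frac{n}{2}$, every clique has order at most~$2$ and $\turc{n}{\al}$ is a tree (see Corollary~\ref{cor_trees}); every edge is then a bridge, and no edge can be deleted without disconnecting the graph. The paper makes precisely this case split: for $\al < \frac{n}{2}$ it uses the edge-deletion argument you describe, and for $\al \ge \frac{n}{2}$ it falls back on a direct computation of $\fTC(n,\al+1) - \fTC(n,\al)$ from the closed form $3^{n-\al-1}2^{2\al-n+1} + 2^{n-\al-1}$ of Corollary~\ref{cor_trees}, reducing the difference to $3^{x-1}2^{y} - 2^{x-1}$ and checking positivity by elementary inequalities. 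Your proposal mentions a computational fallback only for the $n$-step and only at ``boundary values of $\al$''; the range $\al \ge \frac{n}{2}$ is not a boundary sliver but half the parameter space, and it is exactly where your preferred argument breaks down.
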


\begin{proof}

We first prove that $\fTC(n,\al)$ is strictly increasing in $n$ when $\al$ is fixed. The cases $\al = 1$ and $\al = n-1$ are obvious by Lemma~\ref{lem_f2} and we suppose that $2 \le \al \le n-2$. Let $n' = n-\left\lceil\frac{n}{\al}\right\rceil-\al+1$ and $\al' = \min(n',\al - 1)$. Also, we note $n'' = n+1-\left\lceil\frac{n+1}{\al}\right\rceil-\al+1$ and $\al'' = \min(n'',\al - 1)$. Observe that $n' \le n''$ and $\al' \le \al''$. We have
$$
\begin{array}{l l l l}
\fTC(n,\al) & = & \fT(n-1,\al) + \fT(n',\al') & \textrm{by Lemma~\ref{lem_f2},}\\
& < & \fT(n,\al) + \fT(n'',\al'') & \textrm{by Corollaries~\ref{cor_f1inc} and~\ref{cor_f1inc2},}\\
& = & \fTC(n+1,\al) & \textrm{by Lemma~\ref{lem_f2}.}\\
\end{array}
$$
Therefore, $\fTC(n,\al) < \fTC(n+1,\al)$. 

We now prove that $\fTC(n,\al)$ is strictly increasing in $\al$ when $n$ is fixed. Let $2 \le \al \le n-2$. Obviously,  $\fTC(n,1) < \fTC(n,\al) < \fTC(n,n-1)$ by Lemma~\ref{lem_main2}. We consider two cases.
\begin{enumerate}[a) ]
\item If $\al < \frac{n}{2}$, then $\turc{n}{\al}$ contains at least one clique $H$ of size at least 3 and the remaining cliques are of size at least 2. Suppose that $G$ is the graph obtained from $\turc{n}{\al}$ by removing an edge inside $H$. Then, $G$ is connected and $\al(G)=\al+1$. Moreover, Lemma~\ref{lem_main1} and Theorem~\ref{thm_conUB} ensure that $\fTC(n,\al) < F(G) < \fTC(n,\al+1)$ and the result follows.
\item Suppose now that $\al \ge \frac{n}{2}$. In this case, $\turc{n}{\al}$ and $\turc{n}{\al+1}$ are trees. Let $x = n-\al-1$, $x'=n-\al-2$, $y = 2\al -n+1$, and $y'=2\al-n+3$. Then, 
$$
\begin{array}{l l l l}
\fTC(n,\al+1) - \fTC(n,\al) & = & 3^{x'} 2^{y'} + 2^{x'} - 3^x 2^y  - 2^x & \textrm{by Corollary~\ref{cor_trees},}\\ 
& = & 3^{x-1} 2^y - 2^{x-1}.
\end{array}
$$

As $\al \le n-2$, we have that $x-1 \ge 0$. Thus, $2^{x-1} \le 3^{x-1}$. Morevover, as $\al \ge \frac{n}{2}$, we have that $y \ge 0$ and thus $2^y \geq 1$. It follows that $3^{x-1} 2^y - 2^{x-1} \geq 0$. The case of equality with $0$ happens when both $x-1 = 0$ and $y = 0$, that is, when $\al = 1$. This never holds since $\al \geq 2$. Therefore $\fTC(n,\al)$ is strictly increasing in $\al$. \qedhere
\end{enumerate}
\end{proof}

\section{Polyhedral study} \label{sec_poly}

In the previous sections, we have stated that the graphs with maximum Fibonacci index inside the classes ${\cal G}(n,\al)$ and ${\cal C}(n,\al)$ are isomorphic to Tur\'an graphs and Tur\'an-connected graphs respectively (see Theorems \ref{thm_genUB} and  \ref{thm_conUB}). These results have been suggested thanks to the system GraPHedron \cite{GPHRepFiboAlpha}.

In this section, we further push the use of the system GraPHedron as outlined in \cite{GphDesc}. Indeed, this framework allows to suggest the set of all optimal linear inequalities among the stability number and the Fibonacci index for graphs inside the class ${\cal G}(n)$ of general graphs of order $n$ and the class ${\cal C}(n)$ of connected graphs of order $n$. That is, it allows to determine for small values of $n$ the complete description of the polytopes
\begin{equation}\label{eq:poly1}
\Polyg{n} = \conv \left\{ (x,y) \ | \ \exists G \in {\cal G}(n), \alpha(G) = x, \F(G) = y \right\},
\end{equation}
\begin{equation}\label{eq:poly2}
\Polyc{n} = \conv \left\{ (x,y) \ | \ \exists G \in {\cal C}(n), \alpha(G) = x, \F(G) = y \right\},
\end{equation}
where \emph{conv} denotes the convex hull.

\begin{figure}[!ht]
\begin{center}
\includegraphics[width=0.45\textwidth]{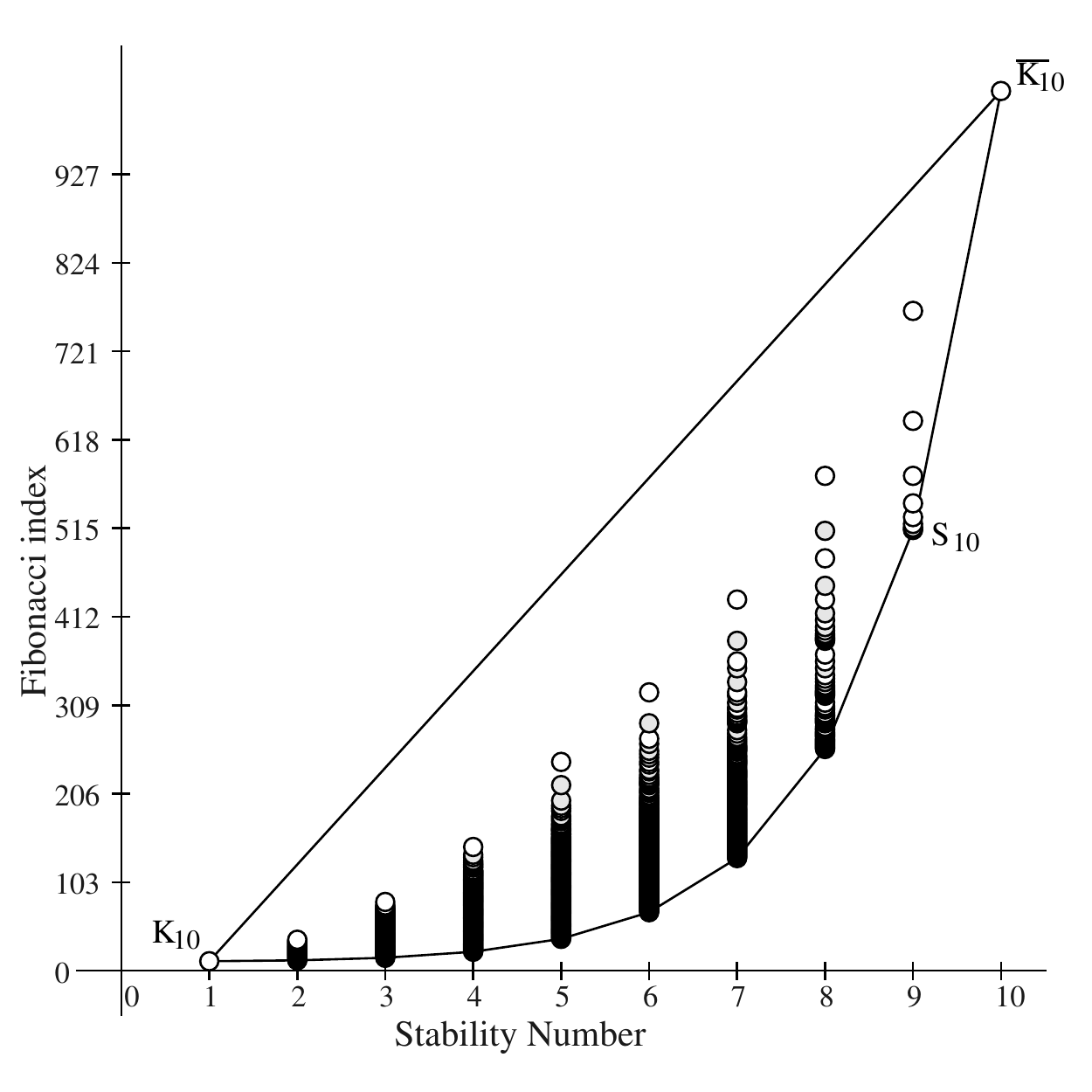} \includegraphics[width=0.45\textwidth]{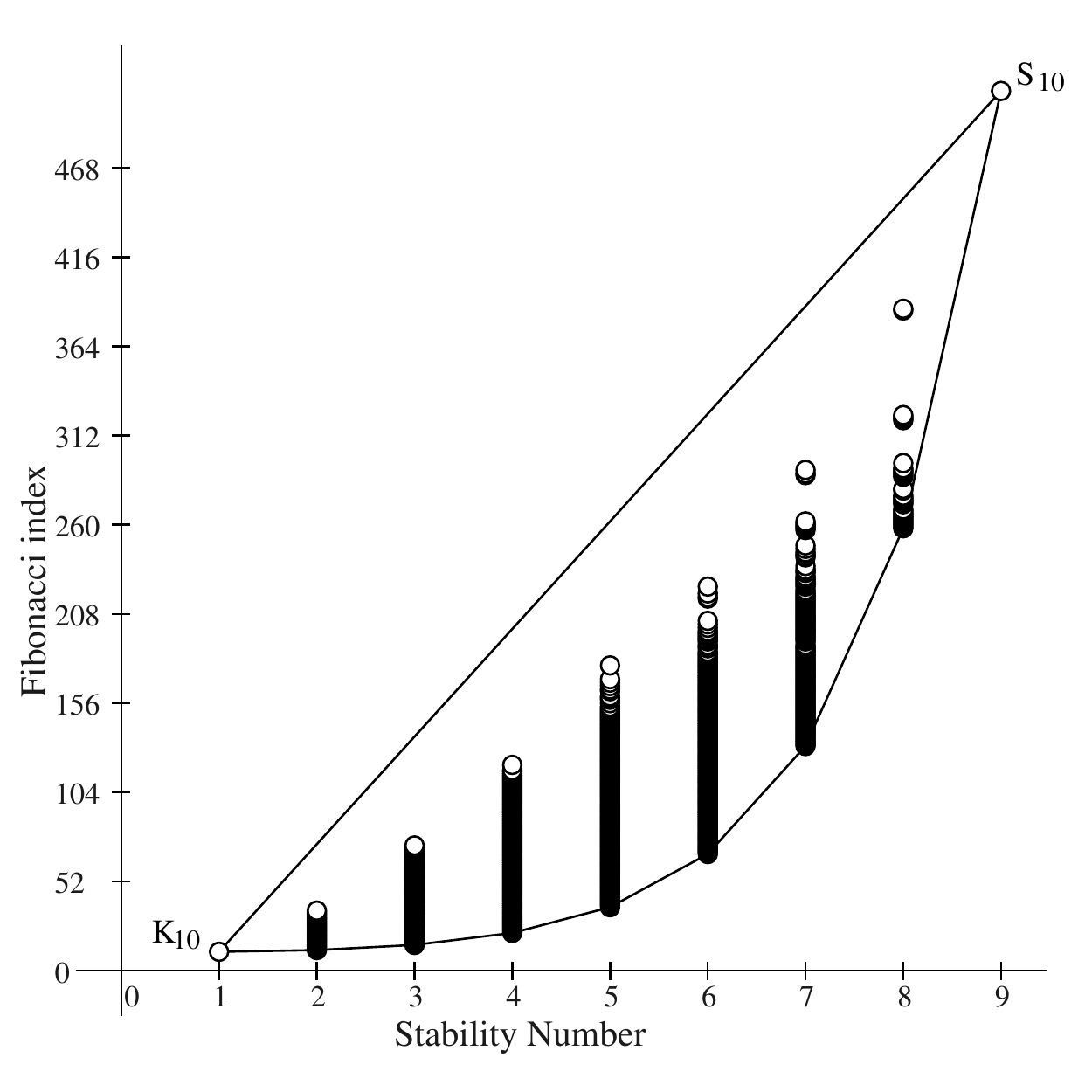}
\caption{The polytopes $\Polyg{10}$ (left) and $\Polyc{10}$ (right)}  \label{fig:polys}
\end{center}
\end{figure}

For example, Figure~\ref{fig:polys} shows the polytopes $\Polyg{n}$ and $\Polyc{n}$ when $n=10$, as given in the reports created by GraPHedron \cite{GPHRepFiboAlpha}. In these representations, we associate to a point $(x,y)$ the set of all graphs with a stability number $x$ and a Fibonacci index $y$, and we say that the point $(x,y)$ \emph{corresponds} to these graphs. For instance, in Figure~\ref{fig:polys}, the point $(1,11)$ corresponds to the graph $\K{10}$, whereas the point $(9,2^9 + 1)$ corresponds to the graph $\Star{10}$.

In this section, we make a polyhedral study in a way to give a complete description of the polytopes $\Polyg{n}$ and $\Polyc{n}$ for all (sufficiently large) values of $n$. More precisely, we are going to describe the \emph{facet defining inequalities} of both polytopes $\Polyg{n}$ and $\Polyc{n}$, that is, their minimal system of linear inequalities. Let us fix some notation:
$$
\Lg{n}{x} = \frac{2^{n} - n -1}{n-1} (x - 1) + n+1, 
$$
$$ \Lc{n}{x} = \frac{2^{n-1} - n }{n-2} (x - 1) + n+1.
$$
The following Theorems~\ref{thm_polyg} and \ref{thm_polyc} give the complete description of $\Polyg{n}$ and $\Polyc{n}$. These theorems will be proved at the end of this section, after some preliminary results.

\begin{thm} \label{thm_polyg}
Let $n \ge 5$. Then the polytope $\Polyg{n}$ has $n$ facets defined by the inequalities 
\begin{eqnarray}
y & \ge & \left(2^k -1 \right) x + 2^k (1-k) + n, \quad \textrm{ for } k = 1, 2, \dots, n-1,\\ \label{eq:otherg}
y & \le & \Lg{n}{x}.
\end{eqnarray}
\end{thm}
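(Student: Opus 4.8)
The plan is to describe $\Polyg{n}$ directly as a convex polygon in the plane by identifying its vertices and edges. For each integer $\al$ with $1\le\al\le n$, Theorem~\ref{thm_lb} gives that the minimum of $\F$ over graphs of stability number $\al$ is $g(\al):=2^{\al}+n-\al$, attained by $\CS{n}{\al}$, while Theorem~\ref{thm_genUB} gives that the maximum is $\fT(n,\al)$, attained by $\tur{n}{\al}$. Every graph of order $n$ and stability number $\al$ thus contributes a point of the vertical segment joining $(\al,g(\al))$ and $(\al,\fT(n,\al))$, and both endpoints are attained; hence $\Polyg{n}$ is the convex hull of the $2n$ points $(\al,g(\al))$ and $(\al,\fT(n,\al))$, $1\le\al\le n$, its lower boundary governed by the values $g(\al)$ and its upper boundary by the values $\fT(n,\al)$. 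Note that $g(1)=\fT(n,1)=n+1$ and $g(n)=\fT(n,n)=2^{n}$, so the polygon pinches at the two points $(1,n+1)$ and $(n,2^{n})$, which correspond to $\K{n}$ and $\overline{\K{n}}$.

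First I would settle the lower boundary. The sequence $g$ is strictly convex, since $g(\al+1)-2g(\al)+g(\al-1)=2^{\al-1}>0$; hence each $(\al,g(\al))$ is a vertex of the lower hull and the lower boundary is exactly the chain through $(1,g(1)),\dots,(n,g(n))$, with $n-1$ edges. The line through the consecutive vertices $(k,g(k))$ and $(k+1,g(k+1))$ has slope $g(k+1)-g(k)=2^{k}-1$ and equation $y=(2^{k}-1)x+2^{k}(1-k)+n$. By strict convexity, $g(j)\ge(2^{k}-1)j+2^{k}(1-k)+n$ for every integer $j$, with equality iff $j\in\{k,k+1\}$; combined with $\F(G)\ge g(\al(G))$ (Theorem~\ref{thm_lb}) this shows that the inequality $y\ge(2^{k}-1)x+2^{k}(1-k)+n$ holds at every graph point, hence over $\Polyg{n}$, and is facet defining (tight exactly at $(k,g(k))$ and $(k+1,g(k+1))$). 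This gives the first $n-1$ facets, for $k=1,\dots,n-1$.

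The harder part is to show that the upper boundary reduces to the single segment from $(1,n+1)$ to $(n,2^{n})$, i.e.\ that the concave majorant of the points $(\al,\fT(n,\al))$ is the chord through its endpoints. Since $\Lg{n}{1}=n+1=\fT(n,1)$ and $\Lg{n}{n}=2^{n}=\fT(n,n)$, this amounts to the strict inequality $\fT(n,\al)<\Lg{n}{\al}$ for $2\le\al\le n-1$, which is the main obstacle. The difficulty is that $\fT(n,\al)$ is \emph{not} convex in $\al$ (the floors and ceilings in the balanced clique sizes create small irregularities), so the claim does not follow from convexity. My approach is to pass to the smooth relaxation: writing $\fT(n,\al)=\prod_i(c_i+1)$ with $\sum_i(c_i+1)=n+\al$ over $\al$ factors, the real maximum of such a product is attained at equal factors, so $\fT(n,\al)\le(1+n/\al)^{\al}=:\phi(\al)$, with $\phi(1)=n+1$ and $\phi(n)=2^{n}$.

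It then suffices to show $\phi(\al)<\Lg{n}{\al}$ on $(1,n)$, which I would do by elementary calculus on $\psi(\al):=\Lg{n}{\al}-\phi(\al)$, which vanishes at $\al=1,n$. Using $\ln\phi(\al)=\al\ln(1+n/\al)$ one finds that $\phi$ is convex near $\al=1$ and concave near $\al=n$ with a single inflection, so $\psi$ is concave-then-convex; together with $\psi'(1)>0$ and $\psi'(n)<0$ (here the chord slope $\frac{2^{n}-n-1}{n-1}$ exceeds $\phi'(1)$ but is smaller than $\phi'(n)$, the latter comparison requiring $n\ge6$) this forces $\psi$ to be unimodal with a single interior maximum, whence $\psi>0$ on $(1,n)$. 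Consequently $\fT(n,\al)\le\phi(\al)<\Lg{n}{\al}$ for $2\le\al\le n-1$, so $y\le\Lg{n}{x}$ is valid over $\Polyg{n}$ and facet defining, being tight exactly at $(1,n+1)$ and $(n,2^{n})$; this single upper facet brings the total to $n$ facets. The step I expect to require genuine care is precisely this upper inequality near $\al=n$: the relaxation $\phi$ is not tight enough there, and indeed $\phi$ already exceeds $\Lg{n}{\al}$ when $n=5$. I would therefore treat $n=5$ by direct evaluation of the five Turán points against $\Lg{5}{\cdot}$ (this is where the hypothesis $n\ge5$ becomes sharp, since for $n=4$ the point of $\fT(4,2)$ lies above the corresponding chord and the upper boundary has two edges), and use the relaxation argument above for $n\ge6$.
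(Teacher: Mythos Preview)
Your lower-boundary argument is correct and is exactly the paper's Lemma~\ref{lem:csineq}: strict convexity of $\al\mapsto 2^{\al}+n-\al$ makes the $n$ points $(\al,2^{\al}+n-\al)$ vertices of the polygon, and the $n-1$ consecutive chords give the first family of facets.

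For the upper facet your route is genuinely different from the paper's. The paper never introduces a smooth majorant; it proves (Lemma~\ref{lem:angcoef}) the slope bound $\fT(n,\al)/(\al-1)\le 2^{n}/(n-1)$ for $n\ge 7$ by induction on $n$, feeding the recursion $\fT(n,\al)=\fT(n-1,\al)+\fT(n-\lceil n/\al\rceil,\al-1)$ of Lemma~\ref{lem_ind} back into itself, and then deduces $\fT(n,\al)<\Lg{n}{\al}$ in Lemma~\ref{lem:relaxg}, with $n=5,6$ checked directly. Your AM--GM relaxation $\fT(n,\al)\le\phi(\al):=(1+n/\al)^{\al}$ is a legitimate alternative that trades the combinatorial induction for a one-variable calculus problem.

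There is, however, a concrete error in your calculus. Since $\phi''/\phi=((\ln\phi)')^{2}+(\ln\phi)''$ and $(\ln\phi)''(\al)=-n^{2}/(\al(\al+n)^{2})$, the sign of $\phi''(\al)$ is that of $A(t)^{2}-1/(n\,t(t+1)^{2})$, where $t=\al/n$ and $A(t)=\ln(1+1/t)-1/(t+1)$. At $t=1$ this equals $(\ln 2-\tfrac12)^{2}-1/(4n)$, which is \emph{positive} for every $n\ge 7$ because $1/\bigl(4(\ln 2-\tfrac12)^{2}\bigr)\approx 6.70$; so your premise ``$\phi$ is concave near $\al=n$'' fails for all $n\ge 7$. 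In fact one can check that $g(t):=1/(t(t+1)^{2}A(t)^{2})$ stays below $n$ on all of $[1/n,1]$ for $n\ge 7$ (its maximum there is $g(1)\approx 6.70$ for moderate $n$, and $g(1/n)\sim n/(\ln n)^{2}<n$ for large $n$), so $\phi$ is convex on the whole interval $[1,n]$; then $\psi=\Lg{n}{\cdot}-\phi$ is concave with $\psi(1)=\psi(n)=0$, and $\psi>0$ on $(1,n)$ is immediate, with no appeal to the endpoint derivatives. Your convex-then-concave picture of $\phi$ is correct only for $n\in\{5,6\}$; for $n=6$ your single-inflection argument with $\psi'(1)>0$ and $\psi'(n)<0$ does go through. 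So the approach works, but the case split must read: $n=5$ by direct evaluation, $n=6$ via the single inflection, and $n\ge 7$ via global convexity of $\phi$ on $[1,n]$.
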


\begin{thm} \label{thm_polyc}
Let  $n \ge 8$. Then the polytope $\Polyc{n}$ has $n-1$ facets defined by the inequalities
\begin{eqnarray}
y & \ge & \left(2^k -1 \right) x + 2^k (1-k) + n, \quad \textrm{ for } k = 1, 2, \dots, n-2,\\ \label{eq:otherc}
y & \le &  \Lc{n}{x}.
\end{eqnarray}
\end{thm}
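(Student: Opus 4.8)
The plan is to realise $\Polyc{n}$ as an explicit polygon: I would first pin down its vertices, then verify that each of the $n-1$ listed inequalities is valid on all of ${\cal C}(n)$ and tight on an edge, and finally invoke the two-dimensional fact that a polygon all of whose vertices and consecutive-vertex edges are accounted for has no further facets.

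I would begin with the vertices and the lower facets. By Theorem~\ref{thm_lb}, the minimum of $\F$ over ${\cal C}(n,k)$ is $2^k+n-k$, attained by the connected graph $\CS{n}{k}$ for $1\le k\le n-1$; write $P_k=(k,\,2^k+n-k)$. The second difference of $2^k+n-k$ in $k$ equals $2^{k-1}>0$, so the $P_k$ are strictly convex and each is a vertex of the lower hull. The line through $P_k$ and $P_{k+1}$ has slope $2^k-1$ and is exactly $y=(2^k-1)x+2^k(1-k)+n$, i.e.\ the stated lower inequality of index $k$, for $k=1,\dots,n-2$. Validity is immediate from Theorem~\ref{thm_lb}: a graph with $\al(G)=j$ has $\F\ge 2^j+n-j$, and substituting $P_j$ into the $k$-th inequality reduces after cancellation to $2^{\,j-k}\ge (j-k)+1$, which holds for every integer $j-k$; equality holds exactly at $j\in\{k,k+1\}$, so each line is tight on the edge $P_kP_{k+1}$ and defines a facet.

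For the single upper facet, the extreme vertices are $P_1=(1,n+1)$ (from $\K{n}$) and $P_{n-1}=(n-1,2^{n-1}+1)$ (from $\Star{n}\simeq\turc{n}{n-1}$), and one checks directly that $y=\Lc{n}{x}$ passes through both. Validity and facetness of this inequality both follow from the central claim that $\fTC(n,k)<\Lc{n}{k}$ for $2\le k\le n-2$, with equality when $k\in\{1,n-1\}$: indeed $\F(G)\le \fTC(n,\al(G))\le \Lc{n}{\al(G)}$ by Theorem~\ref{thm_conUB} (the exceptional pair $(5,2)$ does not occur since $n\ge 8$), which gives validity, while the strict inequality for interior $k$ forces every non-extreme upper point below the line, so the upper hull is the single edge $P_{n-1}P_1$ with no intermediate vertex.

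To establish this central claim I would split at $k=\lceil n/2\rceil$. For $k\ge\lceil n/2\rceil$ the tree formula of Corollary~\ref{cor_trees} writes $\fTC(n,k)=(3/2)^{n-1}(4/3)^k+2^{n-1}(1/2)^k$, a strictly convex function of the real variable $k$; hence on $[\lceil n/2\rceil,n-1]$ it lies below the chord joining its endpoints, and since the right endpoint lies on the line $\Lc{n}{x}$ and the left one below it, that chord, and thus $\fTC$, stays strictly below the line throughout the tree range, once the single junction estimate $\fTC(n,\lceil n/2\rceil)<\Lc{n}{\lceil n/2\rceil}$ is verified. For $2\le k<\lceil n/2\rceil$ I would instead use the recursion $\fTC(n,k)=\fT(n-1,k)+\fT(n',\al')$ of Lemma~\ref{lem_f2} together with the explicit value of $\fT$ to bound $\fTC(n,k)$; here every clique has size at least two, which keeps the two $\fT$ terms small, and the comparison with the line reduces once more to elementary exponential inequalities. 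Completeness then follows by the vertex count: $\Polyc{n}$ has vertices $P_1,\dots,P_{n-1}$, traversed by the $n-2$ lower facets and closed by the one upper facet, for $n-1$ facets in all. The hard part is precisely this central claim near $k\approx n/2$: because the first differences of $\fTC(n,\cdot)$ dip exactly at the tree/non-tree junction, $\fTC(n,\cdot)$ is not globally convex, so no single convexity argument delivers the below-the-chord bound and the junction inequality must be secured on its own; the analogous Theorem~\ref{thm_polyg} for $\Polyg{n}$ would be treated the same way, with $\fT$ and $\Lg{n}{x}$ in place of $\fTC$ and $\Lc{n}{x}$.
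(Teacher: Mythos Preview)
Your treatment of the lower boundary is correct and is essentially the paper's Lemma~\ref{lem:csineq}: the strict convexity of $k\mapsto 2^k+n-k$, the line through $P_k$ and $P_{k+1}$, and validity via $2^{j-k}\ge(j-k)+1$ all work, and the vertex count closing the polygon is fine.

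The divergence, and the gap, is in the upper facet. The paper does not split at $\lceil n/2\rceil$ but at $\al=n-4$. For $2\le\al\le n-4$ it proves the \emph{stronger} inequality $\fT(n,\al)<\Lc{n}{\al}$ (enough since $\fTC\le\fT$) by a slope comparison: Lemma~\ref{lem:angcoef2} shows $\fT(n,\al)/(\al-1)\le 2^{n-1}/(n-2)$ via an induction on $n$ with a three-way case split ($\al=2$, $\al=n-4$, $3\le\al\le n-5$), and this bounds the slope of the segment from $(1,n+1)$ to $(\al,\fT(n,\al))$ by that of $y=\Lc{n}{x}$. Only the two residual values $\al\in\{n-3,n-2\}$ are then handled via the tree formula, by one direct computation; the cases $n\in\{8,9,10\}$ are checked separately.

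Your convexity argument on the tree range $k\ge\lceil n/2\rceil$ is correct and is a genuinely different, elegant device compared with the paper's endpoint calculation. But your non-tree range $2\le k<\lceil n/2\rceil$ is not yet a plan: saying that substituting the recursion of Lemma~\ref{lem_f2} and the product formula for $\fT$ ``reduces to elementary exponential inequalities'' skips the real work. The resulting expression involves $\lceil n/k\rceil$, $\lfloor n/k\rfloor$ and $n\bmod k$ and does not simplify transparently; the paper devotes an entire inductive lemma (Lemma~\ref{lem:angcoef2}) to this range precisely because a direct estimate is awkward. Your junction estimate at $k=\lceil n/2\rceil$ is likewise asserted but not verified (it does hold, since $3^{n/2}=o(2^n)$, but it needs an explicit computation plus small-$n$ checks). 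To close the gap, either supply an inductive bound on $\fT$ in the spirit of Lemma~\ref{lem:angcoef2}, or push your split point to the right (e.g.\ to $n-4$ as the paper does) so that only $O(1)$ values of $k$ fall outside your convexity argument and can be finished by hand.
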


\begin{figure}[!ht]
\begin{center}
\includegraphics[scale=0.7]{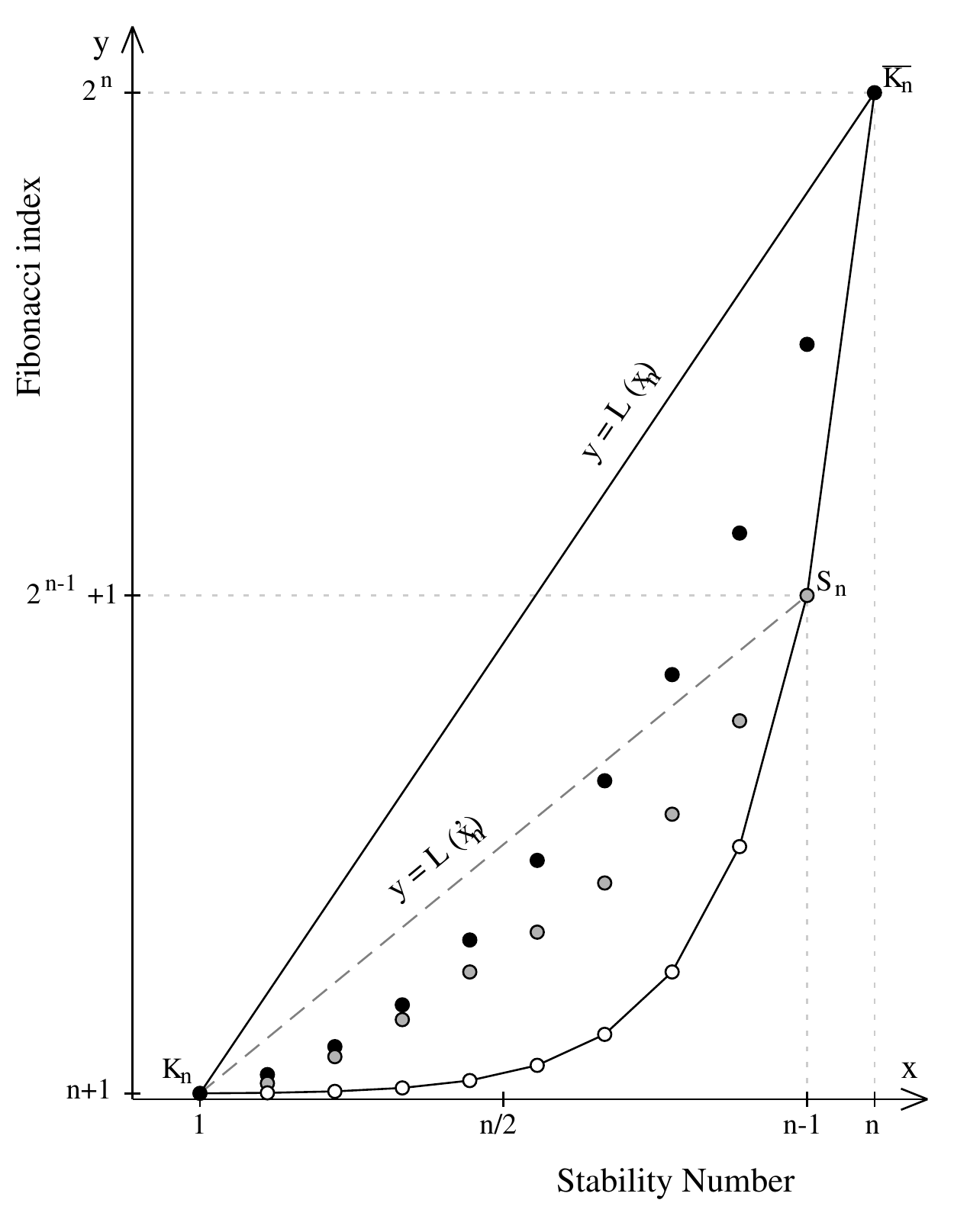}
\caption{Representation of $\Polyg{n}$ and $\Polyc{n}$ together}  \label{fig:poly_all}
\end{center}
\end{figure}

We first make some comments. In Figure~\ref{fig:poly_all}, the two polytopes $\Polyg{n}$ and $\Polyc{n}$ are drawn together.  This gives a graphical summary of the main results stated in Theorems~\ref{thm_lb}, \ref{thm_genUB}, \ref{thm_conUB}, \ref{thm_polyg} and \ref{thm_polyc}:
\begin{itemize}
\item black points correspond to Tur\'an graphs and have maximum $y$-value among general graphs by Theorem~\ref{thm_genUB};
\item grey points correspond to Tur\'an-connected graphs and have maximum $y$-value among connected graphs by Theorem~\ref{thm_conUB};
\item white points correspond to complete split graphs and have minimum $y$-value among general and connected graphs by  Theorem~\ref{thm_lb};
\item the $n$ facets of  $\Polyg{n}$ are the $n-1$ lines joining two consecutive points corresponding to complete split graphs, and the line $y = \Lg{n}{x}$ joining the two points corresponding to $\K{n}$ and $\overline{\K{n}}$ (see Theorem \ref{thm_polyg});
\item the $n-1$ facets of  $\Polyc{n}$ are the $n-2$ lines joining two consecutive points corresponding to (connected) complete split graphs, and the line $y = \Lc{n}{x}$ joining the two points corresponding to $\K{n}$ and $\Star{n}$ (see Theorem \ref{thm_polyc}).
\end{itemize}

In the next lemma, we establish the inequalities (\ref{eq:otherg}) and (\ref{eq:otherc}).

\begin{lem} \label{lem:csineq}
The inequality
\begin{equation} \label{ineq:cs}
y \ge \left(2^k -1 \right) x + 2^k (1-k) + n,
\end{equation}
defines a facet of $\Polyg{n}$ for $k = 1, 2, \dots, n-1$, and a facet of $\Polyc{n}$ for $k = 1, 2, \dots, n-2$.
\end{lem}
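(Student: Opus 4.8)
The plan is to identify the boundary line of \eqref{ineq:cs} with the secant through two consecutive complete split graph points, then prove validity by convexity and facet-ness by a dimension count. First I would record that, by Theorem~\ref{thm_lb}, the graph $\CS{n}{j}$ is the unique minimizer of $\F$ among graphs of stability number $j$ and contributes the point $(j,\,2^j+n-j)$ to $\Polyg{n}$. A direct computation of the slope and intercept of the line through $(k,\,2^k+n-k)$ and $(k+1,\,2^{k+1}+n-k-1)$ shows it is exactly the boundary line of \eqref{ineq:cs} (the slope being $2^{k+1}-2^k-1=2^k-1$); in particular both of these points satisfy \eqref{ineq:cs} with equality.

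For validity I would set $g(t)=2^t+n-t$ and note that its second difference equals $2^t>0$, so $g$ is discretely convex on the integers. The line in \eqref{ineq:cs} is precisely the secant of $g$ joining the abscissae $k$ and $k+1$, and convexity gives $g(j)\ge$ (this secant evaluated at $j$) for every integer $j$, with equality exactly at $j\in\{k,k+1\}$: a convex function lies on or above the extension of any of its chords outside the spanning interval, and the open interval $(k,k+1)$ contains no integer. Combining this with Theorem~\ref{thm_lb}, every graph $G$ of order $n$ satisfies $\F(G)\ge g(\al(G))\ge \left(2^k-1\right)\al(G)+2^k(1-k)+n$, so \eqref{ineq:cs} holds at every point of the generating set of $\Polyg{n}$ and hence on its convex hull.

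For facet-ness I would use that $\Polyg{n}$ is $2$-dimensional: the three points arising from $\CS{n}{1},\CS{n}{2},\CS{n}{3}$ are non-collinear because $g$ is strictly convex. Since \eqref{ineq:cs} is valid and is met with equality by the two distinct points arising from $\CS{n}{k}$ and $\CS{n}{k+1}$, the face it determines has dimension at least $1$; as that face lies on a line it cannot equal the whole $2$-dimensional polytope, so its dimension is exactly $1$ and it is a facet. This requires $\CS{n}{k}$ and $\CS{n}{k+1}$ to exist, i.e. $k+1\le n$, yielding the range $k=1,\dots,n-1$ for $\Polyg{n}$.

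Finally, for $\Polyc{n}$ I would repeat the argument verbatim, observing that $\mathcal{C}(n)\subseteq\mathcal{G}(n)$ so validity is inherited and Theorem~\ref{thm_lb} still supplies both the lower bound and the equality cases. The only change is that the equality-attaining graphs $\CS{n}{k}$ and $\CS{n}{k+1}$ must be connected, which holds precisely when their stability number is at most $n-1$; this forces $k+1\le n-1$ and gives the range $k=1,\dots,n-2$. I expect the main obstacle to be the validity step, specifically making the convex-secant extrapolation precise so that the single inequality for a fixed $k$ dominates the entire lower staircase of complete split graph points, not merely the two points that define it.
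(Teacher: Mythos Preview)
Your proposal is correct and follows essentially the same approach as the paper: both use Theorem~\ref{thm_lb} to reduce to the complete split graph points $(j,2^j+n-j)$, invoke the strict convexity of $j\mapsto 2^j+n-j$, and then read off the secant through consecutive points as a facet. Your treatment is in fact slightly more explicit than the paper's, since you spell out the validity step (every graph point lies on or above the secant because $g$ is convex and the open interval $(k,k+1)$ contains no integer) and the dimension count, whereas the paper argues more geometrically that the complete split graph points are convexly independent polytope vertices with no intermediate vertices between consecutive ones.
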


\begin{proof}
We know by Theorem~\ref{thm_lb} that the points which have minimum $y$-values are those corresponding to complete split graphs. These points are
$$
\left( x, \ 2^{x}+n-x \right),
$$
which are convexly independent as the function $2^x + n - x$ is strictly convex in $x$. Therefore these points are vertices of  $\Polyg{n}$ and $\Polyc{n}$, for $x = 1, 2, \dots, n-1$, and $x = 1, 2, \dots, n-2$, respectively. Moreover, there can be no other polytope vertices between two consecutive points because $x$ is increasing by step of 1, and there exists a complete split graph for each possible value of $x$.

Ineq.~(\ref{ineq:cs}) can then be derived by computing the equation of the line passing by two consecutive points  $\left( k , \ 2^{k}+n-k \right)$ and $\left( k+1 , \ 2^{k+1}+n-k-1 \right)$.
It is obvious that  Ineq.~(\ref{ineq:cs}) is facet defining since these points are two independent polytope vertices.
\end{proof}

We now consider the class ${\cal G}(n)$ and study in more details how points $(x,y)$ corresponding to graphs $G$ with $\al(G) = x$ and $F(G) = y$ are situated with respect to the line $y = \Lg{n}{x}$.

\begin{lem} \label{lem:angcoef}
Let $n$ and $\al$ be integers such that $n \ge 7$ and $2 \le \al \le n$, then
\begin{equation} \label{eq:angcoef}
\frac{ \fT(n,\al) }{\al - 1} \le \frac{2^n}{n-1} \cdot
\end{equation}
\end{lem}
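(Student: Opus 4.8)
The plan is to reduce the claim to a purely multiplicative inequality and then prove it by induction on $\al$, peeling off one clique at a time. Since the Tur\'an graph $\tur{n}{\al}$ is a disjoint union of $\al$ balanced cliques of sizes $c_1,\dots,c_\al\in\{\lfloor n/\al\rfloor,\lceil n/\al\rceil\}$, Lemma~\ref{lem_main1} gives $\fT(n,\al)=\prod_{i=1}^{\al}(c_i+1)$ while $2^n=\prod_{i=1}^{\al}2^{c_i}$. Hence inequality~\eqref{eq:angcoef} is equivalent to $(n-1)\fT(n,\al)\le(\al-1)\,2^n$, and I would prove this form.

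First I would dispose of the two extreme values of $\al$ that do not use the induction hypothesis. For $\al=n$ all cliques are singletons, $\fT(n,n)=2^n$, and the inequality is an equality. For $\al=2$ it reads $(n-1)(\lceil n/2\rceil+1)(\lfloor n/2\rfloor+1)\le 2^n$; here the left-hand side is cubic in $n$ while the right-hand side is exponential, and a short monotonicity check shows the exponential overtakes it exactly at $n=7$ (it fails at $n=6$). This is precisely where the hypothesis $n\ge 7$ is needed and where the bound is essentially tight.

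For the inductive step I take $3\le\al\le n-1$ and peel off a clique of minimum size $c=\lfloor n/\al\rfloor$. Removing one part from a balanced partition leaves a balanced partition, so the remainder is isomorphic to $\tur{n-c}{\al-1}$ and $\fT(n,\al)=(c+1)\,\fT(n-c,\al-1)$. Applying the induction hypothesis to $\fT(n-c,\al-1)$ (legitimate because $\al-1\ge 2$, and because $c\le n/3$ keeps the residual order $n-c\ge 2n/3\ge 7$ once $n\ge 11$) reduces the whole goal to the one-variable inequality
\[
\frac{(n-1)(c+1)}{(n-c-1)\,2^{c}}\le\frac{\al-1}{\al-2}.
\]
I would verify this by cases on $c$: for $c=1$ it collapses to $\al\le n$; for $c=2$ it becomes, after using $\al\le n/2$, a quadratic in $n$ with negative discriminant, hence always true; and for $c\ge 3$ it follows from $(n-1)(c+1)\le(n-c-1)2^{c}$, which holds for $n\ge 7$ in the relevant range $c\le n/3$, since the exponential $2^{c}$ dominates. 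The residual order $n-c$ can fall below $7$ only for $7\le n\le 10$, so I would clear those finitely many cases (all admissible $\al$) directly from the explicit formula for $\fT$.

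The main obstacle is that one cannot argue by simple monotonicity: the ratio $\fT(n,\al)/(\al-1)$ is not monotone in $\al$, as it first decreases and then increases (already visible for $n=7$), so although its maximum is attained at the endpoint $\al=n$ it is not reached by any comparison between consecutive values of $\al$. The genuinely delicate regime is $\al=2$ (two large balanced cliques), which is the tight case pinning down the threshold $n\ge 7$; the only bookkeeping subtlety is ensuring the peeling induction never leaves a residual graph too small to invoke the hypothesis, which the choice of the smallest clique together with the small-$n$ base cases resolves.
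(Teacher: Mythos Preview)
Your argument is correct but follows a genuinely different route from the paper. The paper inducts on $n$ via the \emph{additive} recursion of Lemma~\ref{lem_ind}, namely $\fT(n,\al)=\fT(n-1,\al)+\fT(q,\al-1)$ with $q=n-\lceil n/\al\rceil$; applying the hypothesis to both summands gives $\fT(n,\al)/(\al-1)\le 2^{n-1}/(n-2)+2^{q}/(q-1)$, and after the crude replacement $q\le n-2$ this collapses to the single numeric check $3/(4(n-3))\le 1/(n-1)$, valid for $n\ge 9$, with $7\le n\le 10$ done by hand. You instead induct on $\al$ via the \emph{multiplicative} identity $\fT(n,\al)=(c+1)\,\fT(n-c,\al-1)$ obtained by peeling a smallest clique, which reduces everything to the one--clique inequality $(n-1)(c+1)\le \tfrac{\al-1}{\al-2}(n-c-1)2^{c}$ and a short case split on $c$. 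The paper's route has the virtue of reusing its own Lemma~\ref{lem_ind} and ending in a single clean comparison; yours is more self-contained (it needs only the product formula, not the vertex-deletion recursion) and makes the tight case $\al=2$ visibly the anchor of the induction, at the price of the case analysis on $c$. Both arguments need the same kind of finite verification for small $n$ to keep the residual parameters in range.
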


\begin{proof}
We consider three cases $\al=n$, $\al=2$ and $3 \le \al \le n-1$. Let $q = n  - \left\lceil \frac{n}{\al} \right\rceil$ be the order of the graph obtained by removing a clique of maximal size in $\tur{n}{\al}$.

\begin{enumerate}[($i$)]
\item Suppose that $\al = n$. In this case, both sides of Ineq.~(\ref{eq:angcoef}) are equal and the result trivially  holds. 
\item Suppose that $\al = 2$. If $n$ is even, $\fT(n,2) = \frac{n^2}{4} + n + 1$ and if $n$ is odd, $\fT(n,2) = \frac{n^2}{4} + n + \frac{3}{4}$. Hence 
$$\fT(n,2) (n-1) \le \left(\frac{n^2}{4} + n + 1\right) (n-1).$$
The latter function is cubic, and thus strictly less than $2^n$ when $n \ge 7$. The result holds in case $\al = 2$.
\item Suppose now that $3 \le \al \le n-1$.  The proof will use an induction on $n$. If $7 \le n \le 10$, Ineq.~(\ref{eq:angcoef}) can be checked by easy computation and we assume that $n \ge 11$. By Lemma~\ref{lem_ind}, we have
$$
\frac{ \fT(n,\al) }{\al - 1} = \frac{ \fT(n-1,\al) }{\al - 1} + \frac{ \fT(q,\al-1) }{\al - 1} \le \frac{ \fT(n-1,\al) }{\al - 1} + \frac{ \fT(q,\al-1) }{\al - 2} \cdot
$$
We can use induction for $\fT(n-1,\al)/(\al - 1)$ because either we fall in case ($i$) when $\al=n-1$, or we stay in case ($iii$). We can also use induction for $\fT(q,\al-1)/(\al - 2)$. Indeed, if $\al-1 = 2$ or $\al -1 = q$, we fall in cases ($ii$) and ($i$), respectively. Otherwise, notice that
\begin{equation} \label{eq:onq}
q = n  - \left\lceil \frac{n}{\al} \right\rceil >  n  - \left\lceil \frac{n}{3} \right\rceil \ge n - \frac{n+2}{3} \cdot
\end{equation}
Hence, $q \ge 7$ when $n \ge 11$, and we fall in case ($iii$). It follows that
$$
\frac{ \fT(n,\al) }{\al - 1} \le \frac{2^{n-1}}{n-2} + \frac{2^q}{q-1} \cdot
$$
As $2^q/(q-1)$ is an increasing function, it is maximum when $q=n-2$. This leads to
$$
\frac{ \fT(n,\al) }{\al - 1} \le \frac{2^{n-1}}{n-2} + \frac{2^{n-2}}{n-3} \le \frac{2^{n-1}}{n-3} + \frac{2^{n-2}}{n-3}  = \frac{3 \cdot 2^n}{4(n-3)} \cdot
$$
To finish the proof, one has to check if
$$
\frac{3}{4(n-3)} \le \frac{1}{n-1} \cdot
$$
This is the case when $n \ge 9$. \qedhere
\end{enumerate}

\end{proof}

\begin{lem} \label{lem:relaxg}
Let $G$ be a graph of order $n \ge 5$ with a stability number $\al$ and a Fibonacci index $\F$, then
\[
\F \le \frac{2^{n} - n -1}{n-1} (\al - 1) + n+1,
\]
with equality if and only if $G \simeq \K{n}$ or $G \simeq \overline{\K{n}}$.
\end{lem}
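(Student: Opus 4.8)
The plan is to reduce the statement to a claim about Tur\'an graphs and then feed it into the angular-coefficient bound of Lemma~\ref{lem:angcoef}. By Theorem~\ref{thm_genUB}, every graph $G$ of order $n$ with stability number $\al$ satisfies $\F(G) \le \fT(n,\al)$, so it suffices to prove
\[
\fT(n,\al) \le \frac{2^{n}-n-1}{n-1}(\al-1)+n+1 = \Lg{n}{\al}
\]
for all $1 \le \al \le n$, and to show that equality forces $\al \in \{1,n\}$. The two endpoints are immediate from Example~\ref{exa_F}: $\fT(n,1)=n+1=\Lg{n}{1}$ and $\fT(n,n)=2^{n}=\Lg{n}{n}$. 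Geometrically the inequality asserts that the graph of $\al \mapsto \fT(n,\al)$ lies under the chord joining its two endpoints, which is exactly the slope control that Lemma~\ref{lem:angcoef} packages.

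For the interior range $2 \le \al \le n-1$ I would rewrite the target as a comparison of slopes measured from the point $(1,n+1)$. Since $\al-1>0$, the bound $\fT(n,\al)\le \Lg{n}{\al}$ is equivalent to
\[
\frac{\fT(n,\al)-(n+1)}{\al-1} \le \frac{2^{n}-(n+1)}{n-1}.
\]
Now Lemma~\ref{lem:angcoef} gives $\fT(n,\al)/(\al-1) \le 2^{n}/(n-1)$, whence
\[
\frac{\fT(n,\al)-(n+1)}{\al-1}=\frac{\fT(n,\al)}{\al-1}-\frac{n+1}{\al-1} \le \frac{2^{n}}{n-1}-\frac{n+1}{\al-1} \le \frac{2^{n}}{n-1}-\frac{n+1}{n-1},
\]
where the last step uses $\al-1 \le n-1$. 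After clearing denominators this is precisely $\fT(n,\al)\le \Lg{n}{\al}$. The key observation for the equality analysis is that for $2 \le \al \le n-1$ one has $\al-1 < n-1$, so the final step is \emph{strict}: $\fT(n,\al) < \Lg{n}{\al}$. Combined with $\F(G)\le \fT(n,\al)$, this rules out equality for any graph whose stability number lies strictly between $1$ and $n$.

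The equality bookkeeping then closes out as follows. Equality $\F(G)=\Lg{n}{\al}$ can occur only when both $\F(G)=\fT(n,\al)$ and $\fT(n,\al)=\Lg{n}{\al}$ hold; the second forces $\al\in\{1,n\}$ by the strictness above, and at $\al=1$ the unique graph is $\K{n}$ while at $\al=n$ it is $\overline{\K{n}}$ (these are also exactly the equality cases of Theorem~\ref{thm_genUB}, since $\tur{n}{1}\simeq\K{n}$ and $\tur{n}{n}\simeq\overline{\K{n}}$). The one genuine caveat is that Lemma~\ref{lem:angcoef} is only stated for $n\ge 7$, whereas here $n\ge 5$; I would therefore dispose of $n=5$ and $n=6$ by a direct finite check, tabulating $\fT(n,\al)$ against $\Lg{n}{\al}$ for each intermediate $\al$ and confirming the strict inequality. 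I do not expect any real obstacle: the substance of the argument is supplied by Lemma~\ref{lem:angcoef}, and everything else is elementary slope manipulation together with two small base-case verifications.
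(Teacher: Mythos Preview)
Your proof is correct and follows essentially the same approach as the paper: reduce to $\fT(n,\al) < \Lg{n}{\al}$ via Theorem~\ref{thm_genUB}, interpret this as a slope comparison through the point $(1,n+1)$, invoke Lemma~\ref{lem:angcoef} for $n\ge 7$, and handle $n=5,6$ by direct computation. The only cosmetic difference is the order in which you apply the two estimates (Lemma~\ref{lem:angcoef} first, then the strict step from $\al-1<n-1$), whereas the paper does it the other way round; the equality analysis is identical.
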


\begin{proof}
Notice that the right hand side of the inequality in this lemma is equal to $\Lg{n}{\al}$ (see Figures~\ref{fig:polys} and \ref{fig:poly_all}).

The cases $\al = 1$ and $\al = n$ are trivial and correspond to both cases of equality with $G \simeq \K{n}$ and $G \simeq \overline{\K{n}}$, respectively. We  now assume that $2 \le \al \le n-1$ and we prove the strict inequality $\F < \Lg{n}{\al}$.  By Theorem~\ref{thm_genUB}, it suffices to show that $\fT(n,\al) < \Lg{n}{\al}$. The cases $n=5$ and $n=6$ can be easily checked by computation and we suppose that $n \ge 7$.

To achieve this aim, we use the following geometrical argument. For a fixed value of $n$, we consider two lines. The first one is $y= \Lg{n}{x}$ and the second one is the line passing by the points $(1,n+1)$, $(\al,\fT(n,\al))$ corresponding to $\K{n}$ and $\tur{n}{\al}$, respectively. The first line has slope
$$
 \frac{2^{n} - n -1}{n-1},
$$
and the second line has slope
$$
\dfrac{\fT(n,\al) - (n+1)}{\al-1} \cdot
$$
 We now prove that the slope of the second line is strictly less than the slope if the first line. As $\al < n$ and by Lemma~\ref{lem:angcoef},
$$
\dfrac{\fT(n,\al) - (n+1)}{\al-1} < \dfrac{\fT(n,\al)}{\al-1} - \dfrac{n+1}{n-1} \le \dfrac{2^{n-1}}{n-1} - \dfrac{n+1}{n-1}, 
$$
and the result holds.

\end{proof}

We now consider the class ${\cal C}(n)$, and we make the same kind of computations of done in the two previous lemmas, but with respect to the line $y = \Lc{n}{x}$.

\begin{lem} \label{lem:angcoef2}
Let $n$ and $\al$ be integers such that $n \ge 11$ and $2 \le \al \le n-4$, then
\begin{equation} \label{eq:angcoef2}
\frac{ \fT(n,\al) }{\al - 1} \le \frac{2^{n-1}}{n-2} \cdot
\end{equation}
\end{lem}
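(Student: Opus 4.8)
The plan is to follow the same skeleton as the proof of Lemma~\ref{lem:angcoef}: an induction on $n$ in which a finite set of small orders serves as the base, and the inductive step splits according to the value of $\al$. The essential difference is that the target bound $\frac{2^{n-1}}{n-2}$ is roughly half of the bound $\frac{2^n}{n-1}$ of Lemma~\ref{lem:angcoef}. Consequently, when we expand $\fT(n,\al)$ through the recursion of Lemma~\ref{lem_ind}, we can no longer afford to control the second summand with the weaker estimate of Lemma~\ref{lem:angcoef}; instead we must feed the strengthened bound (that is, Lemma~\ref{lem:angcoef2} itself, as the inductive hypothesis) into \emph{both} summands. This is exactly what forces the hypothesis $\al \le n-4$.

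For the interior range $3 \le \al \le n-5$, set $q = n - \lceil n/\al\rceil$ and use Lemma~\ref{lem_ind} to write
\[
\frac{\fT(n,\al)}{\al-1} = \frac{\fT(n-1,\al)}{\al-1} + \frac{\fT(q,\al-1)}{\al-1}.
\]
The first summand is the instance $(n-1,\al)$ with $\al \le (n-1)-4$, so the inductive hypothesis gives $\frac{\fT(n-1,\al)}{\al-1} \le \frac{2^{n-2}}{n-3}$. For the second summand first observe $\frac{\fT(q,\al-1)}{\al-1} \le \frac{\fT(q,\al-1)}{\al-2}$ (valid since $\al \ge 3$), then apply the inductive hypothesis to $(q,\al-1)$ and use $q \le n-2$ together with the monotonicity of $2^{t-1}/(t-2)$ to obtain $\frac{\fT(q,\al-1)}{\al-2} \le \frac{2^{q-1}}{q-2} \le \frac{2^{n-3}}{n-4}$. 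Adding the two estimates, it remains to verify
\[
\frac{2^{n-2}}{n-3} + \frac{2^{n-3}}{n-4} \le \frac{2^{n-1}}{n-2},
\]
which, after clearing denominators, is equivalent to $n^2 - 11n + 26 \ge 0$ and hence holds for all $n \ge 8$.

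Two boundary values of $\al$ fall outside this scheme and are treated directly. For $\al = 2$ one uses $\fT(n,2) \le \frac{n^2}{4} + n + 1$ (as in Lemma~\ref{lem:angcoef}) and checks the cubic-versus-exponential inequality $\bigl(\tfrac{n^2}{4}+n+1\bigr)(n-2) \le 2^{n-1}$, valid for $n \ge 11$. For the top value $\al = n-4$ the second instance produced by the recursion would be $(n-2,n-5)$, which violates the range condition $\al-1 \le q-4$; so instead we evaluate $\fT(n,n-4) = 81\cdot 2^{n-8}$ explicitly and reduce the claim to $81(n-2) \le 128(n-5)$, again valid precisely for $n \ge 11$. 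It is this failure at $\al = n-4$ that pins down the hypothesis of the lemma.

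Finally, the inductive applications must be legitimate, and this bookkeeping is the main obstacle. One must ensure that both reduced instances lie in the admissible window $\{\, m \ge 11,\; 2 \le \beta \le m-4 \,\}$. For the first summand this is immediate since $\beta = \al \le n-5 = (n-1)-4$. For the second summand one checks $2 \le \al-1$ (from $\al \ge 3$) and $\al - 1 \le q-4$; the latter is the inequality $n - \lceil n/\al\rceil - \al + 1 \ge 4$, whose left-hand side is a concave (unimodal) function of $\al$ and therefore attains its minimum over $3 \le \al \le n-5$ at the endpoint $\al = n-5$, where it equals exactly $4$. The remaining requirement $q \ge 11$ can fail only for bounded $n$ (indeed $q \ge n - \lceil n/3\rceil \ge 11$ as soon as $n \ge 17$), so enlarging the computational base to cover all orders $11 \le n \le 16$ makes the induction go through for $n \ge 17$.
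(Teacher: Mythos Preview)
Your proof is correct and follows essentially the same route as the paper: the same three-case split on $\alpha$ (boundary values $2$ and $n-4$ handled directly, interior range $3 \le \alpha \le n-5$ by induction via Lemma~\ref{lem_ind}), the same reduction to bounding $\frac{2^{n-2}}{n-3} + \frac{2^{n-3}}{n-4}$ after plugging $q \le n-2$, and the same computational base $11 \le n \le 16$ to guarantee $q \ge 11$. One small caveat in your bookkeeping: concavity only pins the minimum of $n - \lceil n/\alpha\rceil - \alpha + 1$ to \emph{some} endpoint of $[3,n-5]$, so you must also check $\alpha = 3$, where the value is $n - \lceil n/3\rceil - 2 \ge 4$ for all $n \ge 9$.
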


\begin{proof}
The proof is similar to the proof of Lemma~\ref{lem:angcoef}. We consider three cases $\al=2$, $\al=n-4$ and $3 \le \al \le n-5$.  Let $q = n  - \left\lceil \frac{n}{\al} \right\rceil$.

\begin{enumerate}[($i$)]
\item Suppose that $\al = 2$. Similarly to case ($ii$) in the proof of Lemma~\ref{lem:angcoef}, we have that $\fT(n,2) \le \frac{n^2}{4} + n + 1$. Hence 
$$\fT(n,2) (n-2) \le \left(\frac{n^2}{4} + n + 1\right) (n-2).$$
The latter function is cubic, and thus strictly less than $2^{n-1}$ when $n \ge 9$.
\item Suppose that $\al = n-4$. In this case, and as $n \ge 11$, the Tur\'an graph $\tur{n}{n-4}$ is isomorphic to the disjoint union of four graphs $\K{2}$ and  $n-8$ graphs $\K{1}$. Hence,
$$
\begin{array}{l l}
\dfrac{2^{n-1}}{n-2} - \dfrac{\fT(n,n-4)}{n-5}  & = \dfrac{2^7 \cdot 2^{n-8}}{n-2} - \dfrac{3^4 \cdot 2^{n-8}}{n-5},\\[2.5ex]
& = \dfrac{2^{n-8} \left[ (2^7 - 3^4) n - (5 \cdot 2^7 - 2 \cdot 3^4) \right]}{(n-2)(n-5)},\\[2.5ex]
& = \dfrac{2^{n-8} \left[ 47 n - 478 \right]}{(n-2)(n-5)},
\end{array}
$$
which is positive when $n \ge 11$. Ineq.~(\ref{eq:angcoef2}) holds in this case.
\item Suppose now that $3 \le \al \le n-5$. We use an induction on $n$. If $11 \le n \le 16$, Ineq.~(\ref{eq:angcoef2}) can be checked by  computation and we assume that $n \ge 17$. Similarly to case ($iii$) in the proof of Lemma~\ref{lem:angcoef}, we have
$$
\frac{ \fT(n,\al) }{\al - 1} \le \frac{ \fT(n-1,\al) }{\al - 1} + \frac{ \fT(q,\al-1) }{\al - 2},
$$
and we can use induction for both terms. Indeed for $\fT(n-1,\al)/(\al - 1)$ we fall in case ($ii$) when  $\al=n-5$, or we stay in case ($iii$). For  $\fT(q,\al-1)/(\al - 2)$, since $3 \le \al \le n-5$, we can check that either $\al-1=2$ or $\al-1=q-4$ two cases already treated in ($i$) and ($ii$), or $3 \le \al-1 \le q-5$. In the latter case, we have $q \ge 11$ when $n \ge 17$ by Ineq.~(\ref{eq:onq}). It follows that
$$
\frac{ \fT(n,\al) }{\al - 1} \le \frac{2^{n-2}}{n-3} + \frac{2^{q-1}}{q-2}.
$$
As $2^q/(q-2)$ is increasing, it is maximum when $q=n-2$. This leads to
$$
\frac{ \fT(n,\al) }{\al - 1} \le \frac{2^{n-2}}{n-3} + \frac{2^{n-3}}{n-4} \le \frac{3 \cdot 2^{n-1}}{4(n-4)} \cdot
$$
The proof is completed because
$$
\frac{3}{4(n-4)} \le \frac{1}{n-2}
$$
when $n \ge 10$. \qedhere
\end{enumerate}
\end{proof}

\begin{lem} \label{lem:relaxc}
Let $G$ be a connected graph of order $n \ge 8$ with a stability number $\al$ and a Fibonacci index $\F$, then
\[
\F \le \frac{2^{n-1} - n }{n-2} (\al - 1) + n+1,
\]
with equality if and only if $G \simeq \K{n}$ or $G \simeq \Star{n}$.
\end{lem}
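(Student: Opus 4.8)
The plan is to mirror closely the proof of Lemma~\ref{lem:relaxg}, replacing $\tur{n}{\al}$ by $\turc{n}{\al}$ and the line $y=\Lg{n}{x}$ by $y=\Lc{n}{x}$. First I would observe that the right-hand side equals $\Lc{n}{\al}$, and that the two equality cases are $\al=1$, where $G\simeq\K{n}$, and $\al=n-1$, where the unique connected graph is $G\simeq\Star{n}$; these correspond exactly to the endpoints $(1,n+1)$ and $(n-1,2^{n-1}+1)$ through which the line $y=\Lc{n}{x}$ passes. It then remains to prove the strict inequality $\F<\Lc{n}{\al}$ for $2\le\al\le n-2$. Since $n\ge 8$ rules out the exceptional pair $(5,2)$, Theorem~\ref{thm_conUB} gives $\F\le\fTC(n,\al)$, so it suffices to show $\fTC(n,\al)<\Lc{n}{\al}$.

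For the bulk of the range I would use the same geometric slope argument as in Lemma~\ref{lem:relaxg}: the point $(\al,\fTC(n,\al))$ lies strictly below the line $y=\Lc{n}{x}$ precisely when the slope of the segment joining it to the $\K{n}$-point $(1,n+1)$ is strictly smaller than the slope $\frac{2^{n-1}-n}{n-2}$ of the line. Writing $\frac{\fTC(n,\al)-(n+1)}{\al-1}=\frac{\fTC(n,\al)}{\al-1}-\frac{n+1}{\al-1}$ and using $\al-1\le n-3<n-2$ to replace $\frac{n+1}{\al-1}$ by the smaller $\frac{n+1}{n-2}$, the whole problem reduces to the single key estimate $\frac{\fTC(n,\al)}{\al-1}\le\frac{2^{n-1}}{n-2}$, after which $\frac{\fTC(n,\al)-(n+1)}{\al-1}<\frac{2^{n-1}-n-1}{n-2}<\frac{2^{n-1}-n}{n-2}$ closes the argument.

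To establish the key estimate, the natural tool is Lemma~\ref{lem:angcoef2}. Since $\turc{n}{\al}$ is obtained from $\tur{n}{\al}$ by adding $\al-1$ edges, Lemma~\ref{lem_main1} gives $\fTC(n,\al)<\fT(n,\al)$; hence for $2\le\al\le n-4$ and $n\ge 11$ the bound follows at once from $\frac{\fTC(n,\al)}{\al-1}<\frac{\fT(n,\al)}{\al-1}\le\frac{2^{n-1}}{n-2}$. The orders $n\in\{8,9,10\}$, which lie below the threshold of Lemma~\ref{lem:angcoef2}, would be dispatched by direct computation, exactly as $n=5,6$ were in Lemma~\ref{lem:relaxg}.

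The main obstacle is the boundary range $\al\in\{n-3,n-2\}$, which falls outside the hypothesis $\al\le n-4$ of Lemma~\ref{lem:angcoef2} and therefore cannot be reached by the uniform bound above. For $n\ge 8$ these values satisfy $\al\ge\frac n2$, so $\turc{n}{\al}$ is a tree and $\fTC(n,\al)$ is given in closed form by Corollary~\ref{cor_trees}, namely $\fTC(n,n-2)=3\cdot 2^{n-3}+2$ and $\fTC(n,n-3)=9\cdot 2^{n-5}+4$. For these two values I would bypass the slope argument and verify $\fTC(n,\al)<\Lc{n}{\al}$ directly, each reducing to an elementary inequality in $n$ of the form ``exponential beats linear'', valid for all $n\ge 8$. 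Assembling the main range, the two boundary values, and the small orders then yields the strict inequality for every $2\le\al\le n-2$ and confirms that equality occurs only for $G\simeq\K{n}$ and $G\simeq\Star{n}$.
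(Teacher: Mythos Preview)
Your proposal is correct and follows essentially the same route as the paper's proof: trivial endpoints $\al=1,n-1$; reduction to $\fTC(n,\al)<\Lc{n}{\al}$ via Theorem~\ref{thm_conUB}; the slope comparison through $(1,n+1)$ combined with Lemma~\ref{lem:angcoef2} (and $\fTC<\fT$) for $2\le\al\le n-4$ with $n\ge 11$; direct computation for $n\in\{8,9,10\}$; and explicit evaluation via Corollary~\ref{cor_trees} for the boundary values $\al\in\{n-3,n-2\}$. The only cosmetic differences are that the paper bounds $\fT(n,\al)<\Lc{n}{\al}$ directly in the main range (equivalent to your use of $\fTC<\fT$), and it dispatches both boundary values at once by proving the single inequality $\fTC(n,n-2)<\Lc{n}{n-3}$ rather than checking $\al=n-3$ and $\al=n-2$ separately.
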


\begin{proof}
Observe that the right hand side of the inequality stated in the lemma is equal to $\Lc{n}{\al}$ (see Figure~\ref{fig:poly_all}).

The cases $\al = 1$ and $\al = n-1$ are trivial and correspond to the two cases of equality. When $2 \le \al \le n-2$, we prove the strict inequality $\F < \Lc{n}{\al}$. The cases $n=8$, $n=9$ and $n=10$ can be checked by computation and we therefore suppose that $n \ge 11$. We consider separately the two cases $2 \le \al \le n-4$ and $n-3 \le \al \le n-2$. 

\begin{enumerate}[($i$)]
\item Let $2 \le \al \le n-4$.  By Theorem~\ref{thm_conUB}, it is enough to show that $\fTC(n,\al) < \Lc{n}{\al}$. We prove a stronger result, that is, $\fT(n,\al) < \Lc{n}{\al}$. The result follows since $\fTC(n,\al) \le \fT(n,\al)$. This situation is well illustrated in Figure~\ref{fig:poly_all} which also indicates that the case $n-3 \le \al \le n-2$ has to be treated separately. 

The argument to prove that $\fT(n,\al) < \Lc{n}{\al}$ is the same as in the proof of Lemma~\ref{lem:relaxg}. We show that the slope of the line $y = \Lc{n}{x}$ is strictly greater than the slope of the line passing by the two points corresponding to $\K{n}$ and $\tur{n}{\al}$.

As $\al \le n-4$, we have
$$
\dfrac{n+1}{\al-1} \ge \dfrac{n+1}{n-5} > \dfrac{n}{n-2}.
$$
This observation and Lemma~\ref{lem:angcoef2} lead to
$$
\dfrac{\fT(n,\al)}{\al-1} - \dfrac{n+1}{\al-1} < \dfrac{\fT(n,\al)}{\al-1} - \dfrac{n}{n-2} \le \dfrac{2^{n-1}}{n-2} - \dfrac{n}{n-2},
$$
and the announced property on the slopes is proved.

\item
Let $n-3 \le \al \le n-2$.  By Theorem~\ref{thm_conUB}, one has to show that $\fTC(n,n-2) < \Lc{n}{n-2}$ and $\fTC(n,n-3) < \Lc{n}{n-3}$. It suffices to prove that $$\fTC(n,n-2) < \Lc{n}{n-3}.$$ Indeed, $\fTC(n,n-3) < \fTC(n,n-2)$ by Corollary~\ref{cor_f2inc} and $\Lc{n}{n-3} < \Lc{n}{n-2}$ because the slope of $y = \Lc{n}{x}$ is strictly positive. As $n \ge 11$, we have $\al \ge \frac{n}{2}$ and we use Corollary~\ref{cor_trees} to compute $\fTC(n,n-2)$. This leads to
$$
\begin{array}{l l}
\Lc{n}{n-3} - \fTC(n,n-2) & =  \dfrac{2^{n-1} - n }{n-2} (n-4) + n+1 - 3 \cdot 2^{n-3} - 2,\\ [2.5ex]
 & =   \dfrac{(n-10) \cdot 2^{n-3} + n + 2}{n-2},
\end{array}
$$
which is strictly positive when $n\ge10$. \qedhere
\end{enumerate}
\end{proof}

We can now give the proof of Theorems~\ref{thm_polyg} and~\ref{thm_polyc}.

\begin{proof}[Proof of Theorems~\ref{thm_polyg} and~\ref{thm_polyc}] 
We begin with the proof for the polytope $\Polyg{n}$. Looking at Lemma~\ref{lem:csineq}, it remains to prove that ($i$) Ineq.~(\ref{eq:otherg}) is facet defining; ($ii$) there are exactly $n$ facet defining inequalities of $\Polyg{n}$. 

The proof ($i$) is straightforward. Indeed, Lemma~\ref{lem:relaxg} ensures that Ineq.~(\ref{eq:otherg}) is valid. Moreover, the points $(1,n+1)$ and $(n,2^n)$ correspond to the graphs $\K{n}$ and $\overline{\K{n}}$, respectively. These points are affinely independent and satisfy Ineq.~(\ref{eq:otherg}) with equality. Therefore Ineq.~(\ref{eq:otherg}) is facet defining.

For ($ii$), it suffices to observe that for any value of $x = 1, 2, \dots, n$, there is exactly one vertex in the polytope: the point which correspond to the complete split graph $\CS{n}{x}$. It follows that $\Polyg{n}$ has exactly $n$ vertices and $n$ facets.

The proof is similar for the polytope $\Polyc{n}$ except that $x < n$. Indeed, Ineq.~(\ref{eq:otherc}) is valid by Lemma~\ref{lem:relaxc}, and the points satisfying Ineq.~(\ref{eq:otherc}) with equality correspond to the graphs $\K{n}$ and~$\Star{n}$.
\end{proof}

\section{Observations} \label{sec_conc}

Tur\'an graphs $\tur{n}{\al}$ have minimum size inside ${\cal G}(n,\al)$ by the Theorem of Tur\'an~\cite{Turan}. Christophe et al.~\cite{GphStableMax} give a tight lower bound for the connected case of this theorem, and Bougard and Joret~\cite{Bougard08} characterized the extremal graphs, which happen to contain the $\turc{n}{\al}$ graphs as a subclass.

By these results and Theorems \ref{thm_genUB} and \ref{thm_conUB}, we can observe the following relations between graphs with minimum size and maximum Fibonacci index. The graphs inside ${\cal G}(n,\al)$ minimizing $m(G)$ are exactly those which maximize $\F(G)$. This is also true for the graphs inside ${\cal C}(n,\al)$, except that there exist other graphs with minimum size than the Tur\'an-connected graphs. 

However, these observations are not a trivial consequence of the fact that $\F(G) < \F(\gme{G}{e})$ where $e$ is any edge of a graph $G$. As indicated in our proofs, the latter property only implies that a graph maximizing $\F(G)$ contains only $\al$-critical edges (and $\al$-safe bridges for the connected case). Our proofs use a deep study of the structure of the extremal graphs to obtain Theorems \ref{thm_genUB} and \ref{thm_conUB}.

We now give additional examples showing that the intuition that more edges imply fewer stable sets is wrong. Pedersen and Vestergaard~\cite{PV06} give the following example. Let $r$ be an integer such that $r\ge 3$, $G_1$ be the Tur\'an graph $\tur{2r}{r}$ and $G_2$ be the star $\Star{2r}$. The graphs  $G_1$ and $G_2$ have the same order but $G_1$ has less edges ($r$) than $G_2$ ($2r-1$). Nevertheless, observe that $F(G_1) = 3^{r} < F(G_2) = 2^{2r-1}+1$. This example does not take into account the stability number since $\al(G_1) = r$ and $\al(G_2) = 2r-1$. 

We propose a similar example of pairs of graphs with the same order and the same stability number (see the graphs $G_3$ and $G_4$ on Figure~\ref{fig_cexa}). These two graphs are inside the class ${\cal G}(6,4)$, however $m(G_3) < m(G_4)$ and $\F(G_3) < \F(G_4)$. Notice that we can get such examples inside ${\cal G}(n,\al)$ with $n$ arbitrarily large, by considering the union of several disjoint copies of $G_3$ and $G_4$.

\begin{figure}[!ht]
\begin{center}
\includegraphics{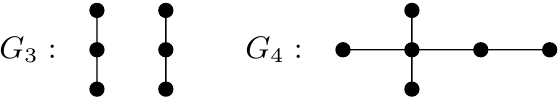}
\caption{Graphs with same order and stability number} \label{fig_cexa}
\end{center}
\end{figure}

These remarks and our results suggest some questions about the relations between the size, the stability number and the Fibonacci index of graphs. What are the lower and upper bounds for the Fibonacci index inside the class ${\cal G}(n,m)$ of graphs order $n$ and size $m$; or inside the class ${\cal G}(n,m,\al)$ of graphs order $n$, size $m$ and stability number $\al$? Are there classes of graphs for which more edges always imply fewer stable sets? We think that these questions deserve to be studied.

\section*{Acknowledgments}

The authors thank Gwena\"el Joret for helpful suggestions.

%

\begin{thebibliography}{10}

\bibitem{Berge01}
{\sc Berge, C.}
\newblock {\em {The Theory of Graphs}}.
\newblock Dover Publications, New York, 2001.

\bibitem{Bougard08}
{\sc Bougard, N., and Joret, G.}
\newblock {Tur\'an Theorem and $k$-connected graphs}.
\newblock {\em J. Graph Theory 58\/} (2008), 1 -- 13.

\bibitem{Brooks41}
{\sc Brooks, {\protect R.-L}.}
\newblock {On colouring the nodes of a network}.
\newblock {\em Proc. Cambridge Philos. Soc. 37\/} (1941), 194 -- 197.

\bibitem{GphTuran}
{\sc Bruy\`ere, V., and M\'elot, H.}
\newblock {Tur\`an Graphs, Stability Number, and Fibonacci Index}.
\newblock In {\em Combinatorial Optimization and Applications, COCOA 2008\/}
  (St. John's, Newfoundland, Canada, 2008), vol.~5165 of {\em Lecture Notes in
  Computer Science}, Springer--Verlag, pp.~127 -- 138.

\bibitem{Byskov04}
{\sc Byskov, {\protect J.M}.}
\newblock {Enumerating maximal independent sets with applications to graph
  colouring}.
\newblock {\em Oper. Res. Lett. 32\/} (2004), 547 -- 556.

\bibitem{GphStableMax}
{\sc Christophe, J., Dewez, S., Doignon, {\protect J.-P}., Elloumi, S.,
  Fasbender, G., Gr\'egoire, P., Huygens, D., Labb\'e, M., M\'elot, H., and
  Yaman, H.}
\newblock { Linear inequalities among graph invariants: using GraPHedron to
  uncover optimal relationships}.
\newblock Accepted for publication in \emph{Networks} (2008), 24 pages.

\bibitem{ErdG61}
{\sc Erd\"os, P., and Gallai, T.}
\newblock {On the minimal number of vertices representing the edges of a
  graph.}
\newblock {\em Magyar Tud. Akad. Mat. Kutat\'o Int. K\"ozl. 6\/} (1961),
  181--203.

\bibitem{GPHRepFiboAlpha}
{\sc {\protect GraPHedron}}.
\newblock {Reports on the study of the Fibonacci index and the stability number
  of graphs and connected graphs}.
\newblock URL: {\tt www.graphedron.net/index.php?page=viewBib\&bib=10}.

\bibitem{GP86}
{\sc Gutman, I., and Polansky, {\protect O.E}.}
\newblock {\em {Mathematical Concepts in Organic Chemistry}}.
\newblock Springer-Verlag, Berlin, 1986.

\bibitem{HW07}
{\sc Heuberger, C., and Wagner, S.}
\newblock {Maximizing the number of independent subsets over trees with bounded
  degree}.
\newblock {\em J. Graph Theory 58\/} (2008), 49 -- 68.

\bibitem{Joret07}
{\sc Joret, G.}
\newblock {\em {Entropy and Stability in Graphs}}.
\newblock PhD thesis, Universit\'e Libre de Bruxelles, Belgium, 2007.

\bibitem{KTWZ07}
{\sc Knopfmacher, A., Tichy, {\protect R.F.}., Wagner, S., and Ziegler, V.}
\newblock {Graphs, partitions and Fibonacci numbers}.
\newblock {\em Discrete Appl. Math. 155\/} (2007), 1175 -- 1187.

\bibitem{LLW03}
{\sc Li, X., Li, Z., and Wang, L.}
\newblock {The Inverse Problems for Some Topological Indices in Combinatorial
  Chemistry}.
\newblock {\em J. Comput. Biol. 10}, 1 (2003), 47 -- 55.

\bibitem{LZG05}
{\sc Li, X., Zhao, H., and Gutman, I.}
\newblock {On the Merrifield-Simmons Index of Trees}.
\newblock {\em MATCH Comm. Math. Comp. Chem. 54\/} (2005), 389 -- 402.

\bibitem{Lovasz86}
{\sc Lov\'asz, L., and Plummer, {\protect M.D}.}
\newblock {\em {Matching Theory}}.
\newblock Akad\'emiai Kiad\'o - North-Holland, Budapest, 1986.

\bibitem{GphDesc}
{\sc M\'elot, H.}
\newblock {Facet Defining Inequalities among Graph Invariants: the system
  GraPHedron}.
\newblock {\em Discrete Applied Mathematics 156\/} (2008), 1875--1891.

\bibitem{MS89}
{\sc Merrifield, {\protect R.E.}., and Simmons, {\protect H.E.}.}
\newblock {\em {Topological Methods in Chemistry}}.
\newblock Wiley, New York, 1989.

\bibitem{PV05}
{\sc Pedersen, {\protect A.S}., and Vestergaard, {\protect P.D}.}
\newblock The number of independent sets in unicyclic graphs.
\newblock {\em Discrete Appl. Math. 152\/} (2005), 246 -- 256.

\bibitem{PV06}
{\sc Pedersen, {\protect A.S}., and Vestergaard, {\protect P.D}.}
\newblock {Bounds on the Number of Vertex Independent Sets in a Graph}.
\newblock {\em Taiwanese J. Math. 10}, 6 (2006), 1575 -- 1587.

\bibitem{PV}
{\sc Pedersen, {\protect A.S}., and Vestergaard, {\protect P.D}.}
\newblock {An Upper Bound on the Number of Independent Sets in a Tree}.
\newblock {\em Ars Combin. 84\/} (2007), 85 -- 96.

\bibitem{PT82}
{\sc Prodinger, H., and Tichy, {\protect R.F.}.}
\newblock Fibonacci numbers of graphs.
\newblock {\em Fibonacci Quart. 20}, 1 (1982), 16 -- 21.

\bibitem{TW05}
{\sc Tichy, {\protect R.F}., and Wagner, S.}
\newblock {Extremal Problems for Topological Indices in Combinatorial
  Chemistry}.
\newblock {\em J. Comput. Biol. 12}, 7 (2005), 1004 -- 1013.

\bibitem{Turan}
{\sc Tur{\'a}n, P.}
\newblock Eine {E}xtremalaufgabe aus der {G}raphentheorie.
\newblock {\em Mat. Fiz. Lapok 48\/} (1941), 436--452.

\bibitem{Wag07}
{\sc Wagner, S.}
\newblock {Extremal trees with respect to Hosoya Index and Merrifield-Simmons
  Index}.
\newblock {\em MATCH Comm. Math. Comp. Chem. 57\/} (2007), 221 -- 233.

\bibitem{WH06}
{\sc Wang, H., and Hua, H.}
\newblock {Unicycle graphs with extremal Merrifield-Simmons Index}.
\newblock {\em J. Math. Chem. 43}, 1 (2008), 202 -- 209.

\bibitem{WHW07}
{\sc Wang, M., Hua, H., and Wang, D.}
\newblock {The first and second largest Merrifield-Simmons indices of trees
  with prescribed pendent vertices}.
\newblock {\em J. Math. Chem. 43}, 2 (2008), 727 -- 736.

\end{thebibliography}

\end{document}